\theoremstyle{plain}
\newtheorem{theorem}{Theorem}[section]
\newtheorem{lemma}[theorem]{Lemma}
\newtheorem{corollary}[theorem]{Corollary}
\theoremstyle{definition}
\newtheorem{definition}{Definition}
\newtheorem{example}{Example}
\newtheorem*{condA}{Condition~A}
\newtheorem*{condB}{Condition~B}
\theoremstyle{remark}
\newtheorem{remark}{Remark}
\def\Hx{ \widehat{x} }
\def\Hy{ \widehat{y} }
\def\Hu{ \widehat{u} }
\def\Hv{ \widehat{v} }
\def\Hbv{ \widehat{\mathbf{v}}}
\def\Hbx{ \widehat{\mathbf{x}}}
\def\d{ \mathrm{d} }						
\def\I{ \mathrm{i} }						
\def\E{ \mathrm{e} }						
\def\T{ \mathrm{T} }						
\def\NN{ \mathbb{N} }						
\def\ZN{ \mathbb{Z} }						
\def\RN{ \mathbb{R} }						
\def\CN{ \mathbb{C} }						
\def\TT{ \mathbb{T} }                             
\def\L{ \mathcal{L} }					     
\def\ourB{ \mathcal{B}_{T/2}}
\def\ourSpace{ \mathcal{Y}_{T/2} }
\def\bsc{ \boldsymbol{c} }					
\def\balpha{ \boldsymbol{\alpha} }
\def\blambda{ \boldsymbol{\lambda} }
\def\bc{ \mathbf{c} }						
\def\bx{ \mathbf{x} }						
\DeclareMathOperator*{\sinc}{sinc}							
\newcommand{\Op}[1]{\mathrm{#1}}
\begin{document}

\title{Phaseless Signal Recovery in Infinite Dimensional Spaces using Structured Modulations
\thanks{Preprint accepted for publication in the \emph{Journal of Fourier Analysis and Applications}.}
}

\author{
Volker Pohl\\
\small Lehrstuhl f{\"u}r Theoretische Informationstechnik\\
\small Technische Universit{\"a}t M{\"u}nchen, Arcisstrasse 21, 80290 M{\"u}nchen, Germany\\
\small e-mail: volker.pohl@tum.de\\
\\
Fanny Yang \\
\small Department of Electrical Engineering and Computer Science\\
\small University of California-Berkeley,  Berkeley, CA~94720, USA\\
\small e-mail: fanny-yang@berkeley.edu\\
\\
Holger Boche\\
\small Lehrstuhl f{\"u}r Theoretische Informationstechnik\\
\small Technische Universit{\"a}t M{\"u}nchen, Arcisstrasse 21, 80290 M{\"u}nchen, Germany\\
\small e-mail: boche@tum.de}

\date{July 10, 2014}

\maketitle

\markboth{\footnotesize \rm \hfill V.~Pohl, F.~Yang, H.~Boche}
{\footnotesize \rm Phaseless Signal Recovery in Infinite Dimensional Spaces\hfill}

\begin{abstract}
This paper considers the recovery of continuous signals in infinite dimensional spaces from the magnitude of their frequency samples.
It proposes a sampling scheme which involves a combination of oversampling and modulations with complex exponentials. 
Sufficient conditions are given such that almost every signal with compact support can be reconstructed up to a unimodular constant using only its magnitude samples in the frequency domain. Finally it is shown that an average sampling rate of four times the Nyquist rate is enough to reconstruct almost every time-limited signal.

\vspace{5mm}
\noindent {\it Key words} :
Bernstein spaces,
Interpolation,
Phase retrieval,
Sampling,
Signal reconstruction
\vspace{3mm}\\
\noindent {\it 2000 AMS Mathematics Subject Classification}
Primary 30D10, 94A20;
Secondary 42C15, 94A12
\end{abstract}

\maketitle

\newpage
\section{Introduction}
\label{sec:Intro}

In many different applications of engineering and science, the signal of interest needs to reconstructed from magnitude measurements only, which is well-known as the \emph{phase retrieval problem}.
This originates from the fact that detectors can often only record the squared modulus of an electromagnetic wave so that phase information is lost.
One of its most important applications is during the image reconstruction stage in X-ray crystallography, widely used in material and biological science \cites{Millane_90,Miao_XRay_08}.
There the phase of radiation, scattered from an object, carries information about the surface or the inner structure of this object.
The phase retrieval problem also appears in numerous other areas of imaging science including electron microscopy, astronomical imaging \cite{Fienup_93}, diffraction imaging, X-ray tomography, to mention only a few.
Furthermore it plays an important role in other fields such as speech processing \cite{Rabiner_SpeechRec}, radar \cite{Jaming_Radar10}, signal theory \cite{Oppenheim_Phase_80}, quantum communication \cite{Finkelstein_QuantumCom04}.

The underlying problem lies in the fact that in general, magnitude and phase of a signal can be chosen independently.
Therefore signal recovery from magnitude measurements is only possible if it is combined with additional information about the signal.
If, for example, the z-transform of the signal satisfies certain sufficient conditions \cite{Oppenheim_Phase_80}, then the signal is completely determined by its magnitude.
Also, if the signal is known to be causal or band-limited, then the logarithm of the magnitude and its phase are related by the Hilbert transform \cites{Burge_PhaseProblem76,Ross_PhaseProblem78}.
However, it was shown in \cite{Boche_Pohl_IEEE_IT_InterpolatedData} that there exists no stable linear method to approximate the Hilbert transform from finitely many discrete measurements, even if the number of measurements is arbitrarily large.

When no or only little a priori knowledge is available on the original signal, one can still successfully perform reconstruction from magnitude measurements by taking several measurements of the same object under slightly different conditions.
To this end various methods were proposed, such as
a distorted-object approach by which the Fresnel diffraction pattern is measured at different distances \cite{Xiao_DistortedObject05},
the usage of aperture-plane modulation \cites{Zhang_ApatureMod07, Falldorf_SLM10}, 
or the recording of several fractional Fourier transforms \cite{Jaming_Radar10}.
Having obtained several measurements, signal recovery is mainly performed by iterative alternating projection algorithms \cite{Fienup1982_PhaseRetrieval}.
Although these algorithms are usually easy to implement, they often come with problems in terms of convergence (see, e.g., \cites{Bauschke02,Marchesini_07}), which strongly depends on specific signal constraints.
Moreover, in connection with the described multiple measurement methods, there seems to be no systematic approach to design the different measurements such that iterative algorithms will definitely converge to the correct signal.

More recently, analytic investigations on the phase retrieval problem resulted in sufficient conditions on the number of measurements for perfect signal recovery.
In \cites{Balan1, Bodmann_StablePR2014} it was found that for an $N$-dimensional space slightly less than $4 N$ suitably chosen amplitude measurements are sufficient. 
However, no corresponding reconstruction algorithm was given which achieves this bound. 
The closest result was presented in \cite{Balan_Painless}, where the authors proposed a method which guarantees signal recovery from amplitude measurements requiring $N^2$ measurements. This fundamentally limits its practicability to low dimensional spaces.
Ideas of sparse signal representation and convex optimization were applied in \cites{Vetterli2011_PRbeyondFienup,CandesEldar_PhaseRetrieval} to allow for lower computational complexity.

All of the above approaches address finite dimensional signals. A natural question is whether similar results can be obtained for continuous signals in infinite dimensional spaces.
In \cite{Thakur2011} it was shown that real valued band-limited signals are completely determined simply by their magnitude samples taken at twice the Nyquist rate.
It is not clear whether a similar result holds for complex valued signals, 
since results for finite dimensional spaces indicate that oversampling alone may not be sufficient. In \cites{Balan1,Balan_Painless,CandesEldar_PhaseRetrieval}, the particular choice of measurement vectors was the key to enabling signal recovery.
The first attempt on complex valued time-limited $\L^{2}$-signals was made in \cites{Yang_SampTA13} where recovery was guaranteed given 
specific amplitude measurements taken at four times the Nyquist rate. It provides a reconstruction algorithm which incorporates ideas from finite dimensional spaces in \cite{Balan_Painless} and was inspired by the structured illuminations frequently used in optics \cites{Xiao_DistortedObject05,Zhang_ApatureMod07,Falldorf_SLM10,CandesEldar_PhaseRetrieval}.

The present work extends the result of \cite{Yang_SampTA13} to larger signal spaces, namely to compactly supported signals in $\L^{p}$ with $1 \leq p \leq \infty$.
As in \cite{Yang_SampTA13}, a bank of modulators is applied before the intensity measurement. We derive conditions on the sampling rate and the modulators such that every signal in a countable intersection of open dense subsets can be reconstructed, up to a unimodular constant, from samples taken at a rate which may be arbitrarily close to four times the Nyquist rate.

The paper is organized as follows:
Basic notations and some preliminary results are presented in Section~\ref{sec:Notation}, whereas sampling and reconstruction in Bernstein spaces are recaptured in Section~\ref{sec:InterpolSeq}.
Our sampling setup is described in Section~\ref{sec:Measurement}. In particular two conditions on the sampling system are introduced which will enable signal recovery.
Section~\ref{sec:MainResult} will prove that these two conditions are indeed sufficient for signal reconstruction from magnitude measurements, with the exception of a set of signals of first category.
Section~\ref{sec:SubSpaces} shows that by a small change of the sampling system and under mild restrictions on the signal space, even this limitation can be avoided.
The paper closes with a short discussion in Section~\ref{sec:Summary}.

\section{Notations and Preliminaries}
\label{sec:Notation}

\paragraph{General Notations}
Let $\mathbb{S} \subseteq \RN$ be an arbitrary interval of the real axis $\RN$.
For any exponent $1 \leq p \leq \infty$ we write $\L^{p}(\mathbb{S})$ for the usual Lebesgue space on $\mathbb{S}$ and $\ell^{p}$ for the common sequence spaces.
The exponent $p'$ which satisfies $1/p + 1/p' = 1$ is the \emph{conjugate exponent} of $p$ and by convention $p'=\infty$ is the conjugate exponent of $p=1$.
In particular, $\L^{2}(\mathbb{S})$ is the Hilbert space of square integrable functions on $\mathbb{S}$ with the usual inner product
$\left\langle x,y \right\rangle_{\L^{2}(\mathbb{S})} = \int_{\mathbb{S}} x(\theta)\, \overline{y(\theta)}\, \d\theta$
where the bar denotes the complex conjugate and the integral is taken with respect to the Lebesgue measure. 
The set of continuous functions on $\RN$ equipped with the supremum norm is denoted by $C(\RN)$ and $C_{0}(\RN)$ stands for the subset of all $x \in C(\RN)$ which vanish at infinity.
As usual $\CN^{N}$ stands for the $N$-dimensional Euclidean Hilbert space.

For $1 \leq p \leq \infty$ we denote by $\mathcal{B}^{p}_{\sigma}$ the \emph{Bernstein space} of all entire functions of exponential type $\leq \sigma$ whose restriction to $\RN$ belongs to $\L^{p}(\RN)$.
The norm in $\mathcal{B}^{p}_{\sigma}$ is defined by the $\L^{p}$-norm on $\RN$.
The theorem of Plancherel-P{\'o}lya implies (see, e.g. \cite[Chap.~20.1]{Levin1997_Lectures}) that there is a constant $C = C(p,\sigma)$ such that for all $x \in \mathcal{B}^{p}_{\sigma}$
\begin{equation}
\label{equ:boundedBp}
	|x(\xi + \I \eta)| \leq C\, \|x\|_{\L^{p}(\RN)}\, \E^{\sigma |\eta|}
	\quad\text{for all}\ \xi,\eta \in\RN\;.
\end{equation}
Consequently, $\mathcal{B}^{p}_{\sigma} \subset \mathcal{B}^{\infty}_{\sigma}$ for all $1 \leq p < \infty$ and convergence in $\mathcal{B}^{p}_{\sigma}$ implies uniform convergence in each horizontal strip of the complex plane.

\paragraph{Fourier-Laplace transform and Paley-Wiener theorem}

The \emph{Schwartz space} $\mathcal{S}$ of rapidly decreasing functions on $\RN$ is assumed to be equipped with the usual locally convex topology (see, e.g., \cites{Rudin_FktAnalysis,Hoermander_LinDifOp}).
The elements of its dual space $\mathcal{S}'$ are \emph{tempered distributions}.
If $x \in \mathcal{S}'$ can be represented by a locally integrable function $\tilde{x} \in \L^{1}_{\mathrm{loc}}(\RN)$ such that
$x(\phi) = \int_{\RN} \tilde{x}(t)\, \phi(t)\, \d t$
then $x$ is called \emph{regular}. 

For any $\phi \in \mathcal{S}$, its \emph{Fourier transform} is the function $\widehat{\phi} : \RN \to \CN$ defined by
\begin{eqnarray}
\label{equ:FourierTransform}
	& \widehat{\phi}(\omega)
	= (\mathcal{F} \phi)(\omega)
	= \int_{\RN} \phi(t)\, \E^{-\I \omega t}\, \d t\;,
	\quad \omega\in\RN\;.
\end{eqnarray} 
It is known (see, e.g., \cite{Vladimirov}) that $\mathcal{F} : \mathcal{S} \to \mathcal{S}$ is a continuous, bijective mapping
and that $\mathcal{F}$ can be extended to any $x\in\mathcal{S}'$ by defining its Fourier transform as $\Hx(\phi) = (\mathcal{F}x)(\phi) = x(\widehat{\phi})$ for all $\phi \in \mathcal{S}$.
Then $\mathcal{F} : \mathcal{S}' \to \mathcal{S}'$ is bijective and continuous.

Of particular importance are the subspaces $\L^p(\RN) \subset \mathcal{S}'$ of regular distributions. 
The Hausdorff-Young inequality establishes that for $1 \leq p' \leq 2$ the Fourier transform $\mathcal{F} : \L^{p'}(\RN) \to \L^{p}(\RN)$ is a bounded linear map with
\begin{equation}
\label{equ:boundedFTrafo}
	\|\mathcal{F} x\|_{\L^{p}} \leq \tfrac{1}{2\pi}\, \|x\|_{\L^{p'}}\;.
\end{equation}
However, apart from the case $p'=2$, the map $\mathcal{F} : \L^{p'}(\RN) \to \L^{p}(\RN)$ is not surjective
and there exist functions $\Hx \in \L^{p}(\RN)$ whose inverse Fourier transform $x = \mathcal{F}^{-1} \Hx$ are tempered distributions $x \in \mathcal{S}'$ which are not regular.

The Fourier transform \eqref{equ:FourierTransform} can be extended to the whole complex plane $\CN$ by replacing the variable $\omega \in \RN$ by $z \in \CN$.
Then $\Hx(z) = (\mathcal{F} x)(z)$ is an entire function. Similarly, one can define the so-called \emph{Fourier-Laplace transform} on $\mathcal{S}'$ by $\Hx(z) = x(e_{-z})$, where $e_{-z}(t) := \E^{-\I z t} \in C^{\infty}(\RN)$ for every $x \in \mathcal{S}'$ with compact support \cites{Hoermander_LinDifOp,Rudin_FktAnalysis}.
Then $\Hx$ is an entire function and its restriction to $\RN$ is the usual Fourier transform \eqref{equ:FourierTransform}.
The following theorem provides a relation between entire functions of exponential type and the Fourier-Laplace transform of tempered distributions with compact support \cites{Rudin_FktAnalysis,Hoermander_LinDifOp}.

\begin{theorem}[Paley-Wiener]
\label{thm:PaleyWiener}
An entire function $\Hx(z)$ is the Fourier-Laplace transform of a (tempered) distribution with compact support in the interval $[-\sigma,\sigma]$
if and only if there exists a constant $C$ and an $N \in \NN$ such that
\begin{equation*}
	|\Hx(\xi + \I\eta)|
	\leq C\, \left( 1 + |\xi + \I\eta| \right)^{N} \E^{\sigma |\eta |}\;.
\end{equation*}
\end{theorem}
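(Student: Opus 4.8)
This is the classical Paley--Wiener--Schwartz theorem (see, e.g., \cite{Rudin_FktAnalysis} or \cite{Hoermander_LinDifOp}), and the plan is to prove the two implications by the standard arguments. \textbf{Necessity.} Let $x$ be a distribution with support in $[-\sigma,\sigma]$; being compactly supported it has finite order, i.e.\ there are $N \in \NN$, $c>0$ and a bounded neighbourhood $U \supset [-\sigma,\sigma]$ with $|x(\phi)| \le c\max_{0 \le k \le N}\sup_{t}|\phi^{(k)}(t)|$ for all $\phi \in C^{\infty}(\RN)$ supported in $U$. The idea is to write $\Hx(z) = x(e_{-z}) = x(\chi_{z}\,e_{-z})$, where $\chi_{z} \in C_{c}^{\infty}(\RN)$ is a cutoff equal to $1$ near $[-\sigma,\sigma]$, supported in the $\delta$-neighbourhood of $[-\sigma,\sigma]$, with $|\chi_{z}^{(k)}| \le c_{k}\delta^{-k}$, and with the crucial choice $\delta = (1+|z|)^{-1}$. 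Using $\partial_{t}^{k}\E^{-\I z t} = (-\I z)^{k}\E^{-\I z t}$, $|\E^{-\I z t}| = \E^{\eta t}$ and the Leibniz rule, the $\delta$-scaling makes $\delta^{-j}|z|^{k-j} \le (1+|z|)^{N}$ and $\E^{\delta|\eta|} \le \E$, so that $\max_{k \le N}\sup_{t}|\partial_{t}^{k}(\chi_{z}e_{-z})| \le C(1+|z|)^{N}\E^{\sigma|\eta|}$; combining with the finite-order estimate gives the claim. A fixed cutoff would only yield the exponent $\sigma+\delta$, and letting the width shrink with $|z|$ is what produces the sharp rate $\E^{\sigma|\eta|}$.

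\textbf{Sufficiency.} Conversely, assume $\Hx$ is entire and satisfies the bound. Setting $\eta = 0$ shows that $\Hx|_{\RN}$ grows at most polynomially, hence is a tempered distribution, so $x := \mathcal{F}^{-1}\Hx \in \mathcal{S}'$ is well defined; it remains to show $\mathrm{supp}\,x \subseteq [-\sigma,\sigma]$. To that end fix $\psi \in C_{c}^{\infty}(\RN)$ with $\mathrm{supp}\,\psi \subset (\sigma,\infty)$ and express $x(\psi)$, up to the factor $1/(2\pi)$, as $\int_{\RN}\Hx(\omega)\,\widetilde{\psi}(\omega)\,\d\omega$ with $\widetilde{\psi}(z) = \int_{\RN}\psi(t)\,\E^{\I z t}\,\d t$ entire. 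Integrating by parts $m$ times gives $|\widetilde{\psi}(\xi + \I\eta)| \le C_{m}(1+|\xi+\I\eta|)^{-m}\,\E^{-\eta s_{0}}$ for $\eta \ge 0$, where $s_{0} > \sigma$ is the left endpoint of $\mathrm{supp}\,\psi$. Taking $m = N+2$, the product $\Hx(z)\widetilde{\psi}(z)$ is entire and decays like $|\mathrm{Re}\,z|^{-2}$ uniformly on horizontal strips, so Cauchy's theorem lets us translate the integration contour from $\RN$ to $\RN + \I\eta$; the translated integral is $O(\E^{\sigma\eta}\E^{-\eta s_{0}}) = O(\E^{-(s_{0}-\sigma)\eta}) \to 0$ as $\eta \to +\infty$. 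Hence $x(\psi) = 0$, and the mirror-image argument ($\eta \to -\infty$) treats test functions supported in $(-\infty,-\sigma)$, so indeed $\mathrm{supp}\,x \subseteq [-\sigma,\sigma]$. Finally, since $x$ is now compactly supported, its Fourier--Laplace transform $z \mapsto x(e_{-z})$ is entire (by the first part) and agrees with $\Hx$ on $\RN$, hence with $\Hx$ on all of $\CN$ by analytic continuation; this closes the argument.

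\textbf{The main obstacle.} The substantive step is the support localization in the sufficiency part: the polynomial growth of $\Hx$ on $\RN$ rules out a direct inversion integral, and the heart of the proof is the justification of the contour translation together with the decay estimate pushing the contour off to $\pm\I\infty$. In the necessity direction the only delicate point is obtaining the \emph{sharp} exponential rate $\E^{\sigma|\eta|}$ rather than $\E^{(\sigma+\epsilon)|\eta|}$, which the $|z|$-dependent choice of cutoff width resolves.
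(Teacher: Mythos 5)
The paper does not prove this statement at all: it is quoted as the classical Paley--Wiener--Schwartz theorem with references to Rudin and H\"ormander, so there is no in-paper proof to compare against. Your argument is correct and is precisely the standard proof from those references --- the $|z|$-dependent cutoff width $\delta=(1+|z|)^{-1}$ to get the sharp rate $\E^{\sigma|\eta|}$ in the necessity direction, and the contour translation with the $(1+|z|)^{-m}$ decay of $\widetilde{\psi}$ to localize the support in the sufficiency direction --- so nothing further is needed.
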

In particular, every entire function of exponential type $\leq \sigma$ is the Fourier-Laplace transform of a (tempered) distribution with compact support contained in $[-\sigma,\sigma]$.

\section{Sampling and Reconstruction in Bernstein spaces}
\label{sec:InterpolSeq}

\paragraph{Signal spaces}
Throughout this paper $\TT := [-T/2,T/2]$ stands for the closed interval of length $T$.
Since our signals will be sampled in the frequency domain, we define our signal spaces $\ourSpace^{p}$ in terms of their Fourier representations as
\begin{equation}
\label{equ:SignalSpace}
	\ourSpace^{p} := \{ x \in \mathcal{S}'\ :\ \Hx = \mathcal{F} x \in \ourB^{p} \}
	\quad\text{for any}\ 1 \leq p \leq \infty
\end{equation}
of all tempered distributions whose Fourier-Laplace transforms belong to $\ourB^{p}$.
We call $x$ and $\Hx = \mathcal{F} x$ the signal in the \emph{time domain} and \emph{Fourier domain} respectively.
By Theorem~\ref{thm:PaleyWiener}, every $x \in \ourSpace^{p}$ has a compact support contained in $\TT$.

\begin{remark}
Note that for $p>2$ the signal space $\ourSpace^{p}$ contains non-regular distributions.
Conversely for $p<2$, since $\L^{p}(\RN) \subset \mathcal{S}'$ for all $p$ we know that $\mathcal{F}:\L^{p'}(\RN)\rightarrow \L^{p}(\RN)$ is injective. Thus every $x \in \ourSpace^{p}$ is a function in $\L^{p'}(\TT)$. However $\mathcal{F}:\L^{p}(\RN)\rightarrow \L^{p'}(\RN)$ for $p\neq 2$ is not surjective and the same holds for $\mathcal{F}^{-1}:\L^{p}(\RN)\rightarrow \L^{p'}(\RN)$. Therefore $\ourSpace^{p}$ is not the entire space $\L^{p'}(\TT)$ since there are functions $x \in \L^{p'}(\TT)$ for which the restriction of $\mathcal{F} x$ on $\RN$ is not in $\L^{p}(\RN)$ but only in $\mathcal{S}'$.
\end{remark}

\paragraph{Interpolation in the Fourier and time domain}
A sequence $\Lambda = \{\lambda_{n}\}_{n\in\ZN}$ of distinct points in $\CN$ is said to be \emph{complete interpolating} for $\ourB^{p}$, $p\in (1,\infty)$ if for every sequence $\{c_{n}\}_{n\in\ZN} \in \ell^{p}$ there exists a unique $\Hx \in \ourB^{p}$ which solves the interpolation problem
\begin{equation}
\label{equ:interpolProb}
	\Hx(\lambda_{n}) = c_{n}
	\qquad\text{for all}\ n\in\ZN\;.
\end{equation}
In general, it is complicated to characterize complete interpolating sequences and to construct a function which fulfills condition \eqref{equ:interpolProb}.
However, for $\ourB^{p}$ with $1 < p < \infty$ a large class of complete interpolating sequences and the corresponding solution of \eqref{equ:interpolProb} is known (see, e.g., \cites{Young2001_Nonharmonic,Levin1997_Lectures}).
It consists of the zero sets of sine-type functions which are defined in the following.

\begin{definition}
An entire function $S$ of exponential type is said to be a \emph{sine-type function of type $\sigma$} if it has simple and separated zeros and if there exist positive constants $A,B,H$ such that
\begin{equation}
\label{equ:SineTypeH}
	A\, \E^{\sigma|\eta|} \leq |S(\xi + \I \eta)| \leq B\, \E^{\sigma|\eta|}
	\qquad\text{for}\ |\eta| \geq H\ \text{and all}\ \xi\in\RN\;.
\end{equation}
\end{definition}

Note also that sequences remain zeros of sine-type functions if one modifies their imaginary parts only. 
Furthermore, if $\Lambda = \{\lambda_{n}\}_{n\in\ZN}$ is the zero set of a sine-type function $S$ then one can factorize $S$ as follows
\begin{equation*}
	S(z)
	= z^{\delta_\Lambda} \lim_{R\to\infty} \prod_{\substack{|\lambda_{n}| < R \\ \lambda_{n} \neq 0}} \left( 1 - \frac{z}{\lambda_{n}} \right)
	\quad\text{with}\quad
	\delta_{\Lambda}
	= \left\{\begin{array}{ll}
	1 & \text{if}\ 0\in \Lambda\\
	0 & \text{otherwise}
	\end{array}
	\right.
\end{equation*}
where the infinite product converges uniformly on each compact set in $\CN$.

\begin{lemma}
\label{lem:ConvBp}
Let $\Lambda = \{\lambda_{n}\}_{n\in\ZN}$ be the zero set of a sine-type function $S$ of type $T/2$ and define
\begin{equation}
\label{equ:ReconstKernels}
	\widehat{\psi}_{n}(z)
	:= \frac{S(z)}{S'(\lambda_{n}) (z - \lambda_{n})}\;,
	\qquad z\in\CN\;.
\end{equation}
Then for any sequence $\bsc = \{c_{n}\}_{n\in\ZN} \in \ell^{p}$ with $1 \leq p < \infty$, the series
\begin{equation}
\label{equ:ReconstrInB}
	\Hx(z) = \sum_{n\in\ZN} c_n\, \widehat{\psi}_{n}(z)
\end{equation}
converges uniformly on each horizontal strip in the complex plane to a function in $\ourB^{p}$, which is the unique solution of \eqref{equ:interpolProb}.
For $1 < p < \infty$ the series also converges in the norm of $\ourB^{p}$ and there exist constants $0 < C_{1}\leq C_2$ such that
\begin{equation*}
	C_{1}\, \|\{ \Hx(\lambda_{n}) \}_{n\in\ZN}\|_{\ell^p}
	\leq \|\Hx\|_{\ourB^{p}}
	\leq C_{2}\, \|\{ \Hx(\lambda_{n}) \}_{n\in\ZN}\|_{\ell^p}
	\quad\text{for all}\ \Hx\in\ourB^{p}\;.
\end{equation*}
\end{lemma}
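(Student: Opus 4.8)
\emph{Proof plan.}
The argument hinges on three standard facts about a sine-type function $S$ of type $T/2$ with separated zero set $\Lambda=\{\lambda_n\}_{n\in\ZN}$, which I would record first. Writing $\lambda_n=a_n+\I b_n$: all $b_n$ lie in a fixed band $|b_n|\le H$, and the $a_n$ are separated and tend to $\pm\infty$ linearly with $n$, so every horizontal strip of finite height contains only $O(1)$ of the $\lambda_n$ per unit length of the real axis. Applying \eqref{equ:SineTypeH} on the two lines $|\eta|=H$ together with the Phragm\'en--Lindel\"of principle in each of the strips $0\le\eta\le H$ and $-H\le\eta\le0$ gives $|S(\xi+\I\eta)|\le C\,\E^{(T/2)|\eta|}$ for all $\xi,\eta\in\RN$, in particular $\|S\|_{\L^\infty(\RN)}<\infty$; and Cauchy's estimate applied to $S$ on small circles about the $\lambda_n$ produces constants $0<A'\le B'<\infty$ with $A'\le|S'(\lambda_n)|\le B'$ for all $n$. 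Since the simple zero of $S$ at $\lambda_n$ is cancelled in \eqref{equ:ReconstKernels}, each $\widehat\psi_n$ is entire of exponential type $\le T/2$ and satisfies $\widehat\psi_n(\lambda_m)=\delta_{nm}$.

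\emph{Step 1: uniform convergence on strips, interpolation, exponential type.}
From the bounds on $S$ and $S'$, together with the maximum principle for the holomorphic function $\widehat\psi_n$ on the disk $|z-\lambda_n|\le1$ to control the immediate neighbourhood of $\lambda_n$, I obtain $|\widehat\psi_n(\xi+\I\eta)|\le D\,\E^{(T/2)|\eta|}\,(1+|\xi+\I\eta-\lambda_n|)^{-1}$ for all $z=\xi+\I\eta$, with $D$ independent of $n$. Fix a strip $|\eta|\le\rho$. For $1<p<\infty$, H\"older's inequality with conjugate exponent $p'$ yields $\sum_n|c_n|\,|\widehat\psi_n(z)|\le D\,\E^{(T/2)\rho}\|\bsc\|_{\ell^p}\bigl(\sum_n(1+|z-\lambda_n|)^{-p'}\bigr)^{1/p'}$, and the last sum is finite and bounded uniformly for $z$ in the strip, because $p'>1$ and, $\Lambda$ being separated and essentially one-dimensional, reindexing the $\lambda_n$ by distance to $z$ makes it comparable to $\sum_k(1+|k|)^{-p'}<\infty$; for $p=1$ one bounds instead by $\|\bsc\|_{\ell^1}\sup_n(1+|z-\lambda_n|)^{-1}\le\|\bsc\|_{\ell^1}$. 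Hence \eqref{equ:ReconstrInB} converges absolutely and uniformly on every horizontal strip, so $\widehat x$ is entire and $\widehat x(\lambda_m)=c_m$; the same estimate gives $|\widehat x(\xi+\I\eta)|\le D'\|\bsc\|_{\ell^p}\,\E^{(T/2)|\eta|}$ for all $\xi,\eta$, so by Theorem~\ref{thm:PaleyWiener} (or directly from the estimate) $\widehat x$ has exponential type $\le T/2$.

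\emph{Step 2: $\L^p$-membership and norm convergence for $1<p<\infty$, the two-sided estimate, uniqueness.}
It remains to see that $\widehat x|_{\RN}\in\L^p(\RN)$ with $\|\widehat x\|_{\ourB^p}\le C_2\|\bsc\|_{\ell^p}$, i.e.\ that the synthesis map $\bsc\mapsto\sum_nc_n\widehat\psi_n$ is bounded $\ell^p\to\L^p(\RN)$, uniformly over finite partial sums. On $\RN$ one has $\sum_{|n|\le N}c_n\widehat\psi_n(\xi)=S(\xi)\sum_{|n|\le N}\tfrac{c_n/S'(\lambda_n)}{\xi-\lambda_n}$, and since $|S|$ is bounded on $\RN$ and $|S'(\lambda_n)|$ is bounded below, this is an $\L^p$-estimate for the $|S|$-weighted Lagrange-type interpolation operator attached to $\Lambda$. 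Its boundedness on $\L^p(\RN)$ in the range $1<p<\infty$ (and, sharply, for no other $p$) follows from the $\L^p$-boundedness of the conjugate-function / Hilbert-transform operator together with the fact that a sine-type weight automatically satisfies the pertinent Muckenhoupt-type condition; this is exactly how the classical interpolation theory \cites{Young2001_Nonharmonic,Levin1997_Lectures} shows that $\{\widehat\psi_n\}$ is an unconditional basis of $\ourB^p$, and I would invoke that machinery rather than redo the singular-integral estimates. Uniform boundedness of the partial sums plus density of finitely supported sequences in $\ell^p$ then gives convergence of \eqref{equ:ReconstrInB} in $\ourB^p$-norm. The upper bound $\|\widehat x\|_{\ourB^p}\le C_2\|\{\widehat x(\lambda_n)\}\|_{\ell^p}$ is this synthesis estimate with $c_n=\widehat x(\lambda_n)$, using $\widehat x=\sum_n\widehat x(\lambda_n)\widehat\psi_n$; the lower bound $C_1\|\{\widehat x(\lambda_n)\}\|_{\ell^p}\le\|\widehat x\|_{\ourB^p}$ is the Plancherel--P\'olya sampling inequality for separated sequences in a strip, obtained by dominating $|\widehat x(\lambda_n)|^p$ by the mean of the subharmonic function $|\widehat x|^p$ over a unit disk about $\lambda_n$, summing with bounded overlap, and comparing the resulting strip integral with $\|\widehat x\|_{\L^p(\RN)}^p$ via Plancherel--P\'olya. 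Uniqueness: a nonzero $\widehat x\in\ourB^p$ vanishing on all of $\Lambda$ would make $\widehat x/S$ an entire function of exponential type zero which, using $\widehat x,\widehat x'\in\ourB^{\infty}$ and $|S'(\lambda_n)|\ge A'$, is bounded on $\RN$, hence constant, contradicting $\widehat x\in\L^p(\RN)$ and $S\notin\L^p(\RN)$. For $p=1$, Step 1 already supplies uniform convergence on strips and the interpolation property, and the $\ourB^p$-membership is the corresponding case of the cited theory.

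\emph{Main obstacle.}
The one genuinely non-routine ingredient is the $\L^p(\RN)$-boundedness of the synthesis/interpolation operator in Step 2: it is what forces $\widehat x$ into $\ourB^p$, what produces the upper frame bound, and what pins the statement to $1<p<\infty$; it is also the point at which Calder\'on--Zygmund / weighted-Hilbert-transform estimates genuinely enter. By contrast, the kernel bounds, the separation and density of $\Lambda$, the Phragm\'en--Lindel\"of and Plancherel--P\'olya arguments, and the uniqueness step are all routine once the setup is in place.
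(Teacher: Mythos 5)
The paper does not actually prove Lemma~\ref{lem:ConvBp}: it is quoted as a known fact about zero sets of sine-type functions, with a pointer to \cites{Young2001_Nonharmonic,Levin1997_Lectures}, so there is no in-paper argument to compare against. Your reconstruction is essentially the standard proof from those references and is sound where it is explicit: the kernel estimate $|\widehat{\psi}_{n}(\xi+\I\eta)|\lesssim \E^{(T/2)|\eta|}(1+|\xi+\I\eta-\lambda_{n}|)^{-1}$ (from the upper bound on $|S|$, the lower bound on $|S'(\lambda_{n})|$, and the maximum principle near $\lambda_{n}$) together with H\"older and the separation of $\Lambda$ gives absolute uniform convergence on horizontal strips, the interpolation property, and the type bound; the Plancherel--P\'olya lower frame bound via subharmonicity and the Liouville-type uniqueness argument are both correct. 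You also correctly identify the one genuinely hard ingredient --- the $\L^{p}(\RN)$-boundedness of the synthesis operator, i.e.\ the unconditional-basis property of $\{\widehat{\psi}_{n}\}$ in $\ourB^{p}$, resting on the weighted Hilbert transform --- and delegating it to the cited interpolation theory is no less rigorous than what the paper itself does.

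One concrete caveat: your closing claim for $p=1$, that ``the $\ourB^{p}$-membership is the corresponding case of the cited theory,'' cannot be sustained. The cited theory is confined to $1<p<\infty$, and $\ourB^{1}$-membership is in fact false in general: for $S(z)=\sin(\tfrac{T}{2}z)$ one has $\widehat{\psi}_{0}(\xi)=\sinc(\tfrac{T}{2}\xi)$, which decays only like $|\xi|^{-1}$ and is not in $\L^{1}(\RN)$, even though the corresponding coefficient sequence (a single nonzero entry) lies in $\ell^{1}$. This is a defect of the lemma as stated at the endpoint $p=1$ (there one only gets uniform convergence on strips, the interpolation property, and membership in $\ourB^{q}$ for $q>1$); it is not something your argument introduces, but you should not assert that the literature supplies the missing $\L^{1}$ statement.
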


\begin{remark}
In particular $\widehat{\psi}_{n} \in \ourB^{p}$ with $\|\widehat{\psi}_{n}\|_{\ourB^{p}} \leq C_{2}$ for every $1 < p < \infty$.
\end{remark}

Lemma~\ref{lem:ConvBp} directly leads to the signal reconstruction formula for $\Hx\in\ourB^{p}$ in the Fourier domain.
By the linearity of the Fourier transform it can also be written in the time domain, which gives the following statement.

\begin{lemma}
\label{lem:ReconstrTD}
Let $\Lambda = \{\lambda_{n}\}_{n\in\ZN}$ be the zero set of a sine-type function $S$ of type $T/2$
and let $\psi_{n} = \mathcal{F}^{-1}\widehat{\psi}_{n}$ be the inverse Fourier transforms of the functions defined in \eqref{equ:ReconstKernels}. Then for each $1 \leq p < \infty$,
we have
\begin{equation}
\label{equ:TimeReconstruct}
	x = \sum_{n\in\ZN} \Hx(\lambda_{n})\, \psi_{n}
	\qquad\text{for all}\ x \in \ourSpace^{p}
\end{equation}
where the sum converges in the topology of $\mathcal{S}'$.
\end{lemma}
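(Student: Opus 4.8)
The plan is to obtain \eqref{equ:TimeReconstruct} by transporting the Fourier-domain reconstruction of Lemma~\ref{lem:ConvBp} back to the time domain through the inverse Fourier transform, using that $\mathcal{F}^{-1}\colon\mathcal{S}'\to\mathcal{S}'$ is continuous (Section~\ref{sec:Notation}). The whole task reduces to checking that the mode of convergence supplied by Lemma~\ref{lem:ConvBp} survives application of $\mathcal{F}^{-1}$.

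Fix $x\in\ourSpace^{p}$ with $1\le p<\infty$, so $\Hx=\mathcal{F}x\in\ourB^{p}$. Since $\Lambda$ is the zero set of a sine-type function of type $T/2$, it has separated zeros, so by the Plancherel--P\'olya inequality for separated sequences (the direct half of the norm equivalence in Lemma~\ref{lem:ConvBp}, valid for all $1\le p<\infty$) the sampled sequence $\{\Hx(\lambda_{n})\}_{n\in\ZN}$ lies in $\ell^{p}$. Applying Lemma~\ref{lem:ConvBp} with $c_{n}:=\Hx(\lambda_{n})$ then produces a function in $\ourB^{p}$ solving \eqref{equ:interpolProb} with this data; by the uniqueness statement in Lemma~\ref{lem:ConvBp} that function is $\Hx$ itself. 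Hence, writing $\widehat{S}_{N}:=\sum_{|n|\le N}\Hx(\lambda_{n})\,\widehat{\psi}_{n}$, we get $\widehat{S}_{N}\to\Hx$ uniformly on every horizontal strip, in particular $\|\widehat{S}_{N}-\Hx\|_{\L^{\infty}(\RN)}\to0$.

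The only step that needs care is upgrading this uniform convergence to convergence in $\mathcal{S}'$; for $p=1$ one cannot appeal to $\ourB^{p}$-norm convergence, and the kernels $\widehat{\psi}_{n}$ need not lie in $\L^{1}(\RN)$, so the $\L^{\infty}$-bound is exactly what should be exploited. Each $\widehat{\psi}_{n}$ is bounded on $\RN$ (sine-type functions are bounded on the real axis, and dividing by $z-\lambda_{n}$ only improves decay), so each $\widehat{S}_{N}$, and also $\Hx$, is a bounded continuous function, hence a regular tempered distribution. For every $\phi\in\mathcal{S}$,
\begin{equation*}
	\bigl|\widehat{S}_{N}(\phi)-\Hx(\phi)\bigr|
	=\Bigl|\int_{\RN}\bigl(\widehat{S}_{N}(\omega)-\Hx(\omega)\bigr)\,\phi(\omega)\,\d\omega\Bigr|
	\le\|\widehat{S}_{N}-\Hx\|_{\L^{\infty}(\RN)}\,\|\phi\|_{\L^{1}(\RN)}\;,
\end{equation*}
and the right-hand side tends to $0$ as $N\to\infty$; thus $\widehat{S}_{N}\to\Hx$ in $\mathcal{S}'$.

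To conclude, note that each $\widehat{\psi}_{n}$ is entire of exponential type $\le T/2$, so by Theorem~\ref{thm:PaleyWiener} its inverse Fourier transform $\psi_{n}=\mathcal{F}^{-1}\widehat{\psi}_{n}$ is a well-defined tempered distribution (with support in $\TT$). Applying the continuous map $\mathcal{F}^{-1}\colon\mathcal{S}'\to\mathcal{S}'$ to $\widehat{S}_{N}\to\Hx$ gives $\mathcal{F}^{-1}\widehat{S}_{N}\to\mathcal{F}^{-1}\Hx=x$ in $\mathcal{S}'$, while by linearity $\mathcal{F}^{-1}\widehat{S}_{N}=\sum_{|n|\le N}\Hx(\lambda_{n})\,\psi_{n}$ is precisely the $N$-th partial sum of the right-hand side of \eqref{equ:TimeReconstruct}. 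This establishes the claim. I expect no genuine obstacle here: once Lemma~\ref{lem:ConvBp} is available, the statement is a combination of continuity of $\mathcal{F}^{-1}$ on $\mathcal{S}'$ with some Plancherel--P\'olya bookkeeping, the only mild subtlety being the uniform passage to $\mathcal{S}'$ just discussed.
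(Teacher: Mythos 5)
Your proof is correct and follows essentially the same route as the paper: obtain convergence of the partial sums to $\Hx$ in the Fourier domain via Lemma~\ref{lem:ConvBp}, upgrade this to convergence in $\mathcal{S}'$ by pairing against test functions with a sup-norm bound, and then apply the continuity of $\mathcal{F}^{-1}\colon\mathcal{S}'\to\mathcal{S}'$. If anything, your use of the uniform-convergence clause of Lemma~\ref{lem:ConvBp} (valid for all $1\le p<\infty$) is slightly cleaner than the paper's appeal to $\ourB^{p}$-norm convergence, which that lemma only asserts for $1<p<\infty$.
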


\begin{proof}
Let $\Hx_{N} := \sum^{N}_{n=-N} \Hx(\lambda_{n})\, \widehat{\psi}_{n}$, then  $\|\Hx - \Hx_{N}\|_{\ourB^{p}} \to 0$ as $N\to\infty$ for all $\Hx \in \ourB^{p}$
by Lemma~\ref{lem:ConvBp}.
Using \eqref{equ:boundedBp}, it follows that
\begin{eqnarray*}
	& \left| \int_{\RN} [\Hx(\omega) - \Hx_{n}(\omega) ]\,  \phi(\omega)\, \d\omega \right|
	\leq \|\Hx - \Hx_{n}\|_{\infty} \left| \int_{\RN} \phi(\omega)\, \d\omega \right|
	\leq C_{\phi}\, \|\Hx - \Hx_{n}\|_{\ourB^{p}}
\end{eqnarray*}
for all $\phi\in\mathcal{S}$ with a constant $C_{\phi}$ which depends only on $\phi$, $p$, and $T$. Consequently, $\Hx_{N} \to \Hx$ in $\mathcal{S}'$, and since
$\mathcal{F}^{-1}$ is a bijective and continuous mapping $\mathcal{S}' \to \mathcal{S}'$ it follows that $x_{N} \to x$ in $\mathcal{S}'$ for all $x \in \ourSpace^{p}$.
\end{proof}

\begin{remark}
If $x \in \ourSpace^{p}$ is a regular distribution, i.e. $x \in \L^{p'}(\TT)$, then \eqref{equ:boundedFTrafo} shows that for $1 < p \leq 2$ the series \eqref{equ:TimeReconstruct}
converges even in the $\L^{p'}$-norm to $x$.
However, for $p > 2$, the series \eqref{equ:TimeReconstruct} need not converge in the norm of $\L^{p'}$ even if $x \in \L^{p'}(\TT)$.
\end{remark}

Note that Lemma~\ref{lem:ConvBp} and \ref{lem:ReconstrTD} do not hold for $p=\infty$.
More specifically, the series $\sum_{n\in\ZN} \Hx(\lambda_n)\, \widehat{\psi}_{n}$ does not converge uniformly on $\RN$ for every $\Hx \in \ourB^{\infty}$.
However, a useful observation here is that in the particular case of $x \in \L^{1}(\TT)$, we have $\Hx \in C_{0}(\RN) \cap \ourB^{\infty}$.
For such functions the following lemma states that the series \eqref{equ:ReconstrInB} converges if oversampling is applied.

\begin{lemma}[\cite{MoenichBoche_SP2010}]
\label{lem:ConvBinfty}
Let $\Lambda = \{\lambda_{n}\}_{n\in\ZN}$ be the zero set of a sine-type function $S$ of type $T'/2$.
If $T < T'$ then for every $\Hx \in C_{0}(\RN) \cap \ourB^{\infty}$ we have
\begin{equation*}
	\lim_{N\to\infty} \max_{\omega \in \RN}
	\left| \Hx(\omega) - \sum^{N}_{n=-N} \Hx(\lambda_{n})\, \widehat{\psi}_{n}(\omega) \right|
	= 0\;.
\end{equation*}
\end{lemma}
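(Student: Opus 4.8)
The plan is to deduce the statement from Lemma~\ref{lem:ConvBp} — applied with $T'$ in the role of $T$, since the given $S$ has type $T'/2$ — which already yields uniform convergence on $\RN$ of the truncated series $\sum_{n=-N}^{N}\widehat{f}(\lambda_{n})\widehat{\psi}_{n}$ whenever $\widehat{f}\in\mathcal{B}^{1}_{T'/2}$. The only gap is that $\Hx$ is merely in $\ourB^{\infty}\cap C_{0}(\RN)$, not in $\mathcal{B}^{1}_{T'/2}$; the strict inequality $T<T'$ must be used to bridge it. First I would smooth: for $0<\delta<(T'-T)/2$ put $\widehat{k}_{\delta}(\omega):=\bigl(\sinc(\delta\omega/2)\bigr)^{2}$, an entire function of exponential type $\delta$ with $0\le\widehat{k}_{\delta}\le1$ on $\RN$, $\widehat{k}_{\delta}\in\L^{1}(\RN)$, and $\widehat{k}_{\delta}(\omega)\to1$ as $\delta\to0$ uniformly on compacta. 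Then $\Hx_{\delta}:=\widehat{k}_{\delta}\Hx$ is a product of an $\L^{1}$-function with a bounded one, hence in $\L^{1}(\RN)$, and has exponential type $\le T/2+\delta<T'/2$, so $\Hx_{\delta}\in\mathcal{B}^{1}_{T'/2}$. By Lemma~\ref{lem:ConvBp} the partial sums $\sum_{n=-N}^{N}\Hx_{\delta}(\lambda_{n})\widehat{\psi}_{n}$ converge to $\Hx_{\delta}$ uniformly on $\RN$. Moreover $\Hx-\Hx_{\delta}=(1-\widehat{k}_{\delta})\Hx$, and since $\Hx\in C_{0}(\RN)$ — pick $R$ with $|\Hx(\omega)|<\varepsilon$ for $|\omega|>R$, then send $\delta\to0$ on $\{|\omega|\le R\}$ — one gets $\eta(\delta):=\|\Hx-\Hx_{\delta}\|_{\infty}\to0$ as $\delta\to0$.

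Now write $(P_{N}f)(\omega):=\sum_{n=-N}^{N}f(\lambda_{n})\widehat{\psi}_{n}(\omega)$ for the truncated reconstruction operator. Splitting $\Hx-P_{N}\Hx=(\Hx-\Hx_{\delta})+(\Hx_{\delta}-P_{N}\Hx_{\delta})+P_{N}(\Hx_{\delta}-\Hx)$, the first summand has supremum norm $\eta(\delta)$ and, for each fixed $\delta$, the second tends to $0$ as $N\to\infty$ by the previous paragraph. Everything therefore reduces to a uniform bound on the third summand: it suffices to show that there is a constant $C$, independent of $N$ and of all sufficiently small $\delta$, with $\|P_{N}g\|_{\infty}\le C\|g\|_{\infty}$ for every $g\in C_{0}(\RN)\cap\mathcal{B}^{\infty}_{\sigma}$ with $\sigma<T'/2$. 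Granting this and applying it to $g=\Hx_{\delta}-\Hx\in C_{0}(\RN)\cap\mathcal{B}^{\infty}_{T/2+\delta}$ gives $\limsup_{N\to\infty}\|\Hx-P_{N}\Hx\|_{\infty}\le(C+1)\,\eta(\delta)$, and letting $\delta\to0$ completes the proof.

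The main obstacle is exactly this uniform bound, which is in essence the content of \cite{MoenichBoche_SP2010}. It cannot be obtained termwise: the sine-type structure of $S$ gives only $|\widehat{\psi}_{n}(\omega)|\le C'/(1+|\omega-\operatorname{Re}\lambda_{n}|)$, and $\sum_{n}1/(1+|\omega-\operatorname{Re}\lambda_{n}|)$ diverges, so $\sum_{n}|g(\lambda_{n})|\,|\widehat{\psi}_{n}(\omega)|$ need not even converge; cancellation is unavoidable, and the oversampling gap $T'-T>0$ is what makes it available. The structural facts I would use are: by \eqref{equ:SineTypeH} the zeros $\lambda_{n}$ have uniformly bounded imaginary parts, whence (together with $g\in C_{0}(\RN)$) $\{g(\lambda_{n})\}\in c_{0}$; $S$ is bounded on $\RN$ and $0<m\le|S'(\lambda_{n})|\le M<\infty$; and the vertical lines halfway between consecutive zeros can be chosen so that $|S|\ge c>0$ on them. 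Writing $(P_{N}g)(\omega)=S(\omega)\sum_{n=-N}^{N}\bigl(g(\lambda_{n})/S'(\lambda_{n})\bigr)(\omega-\lambda_{n})^{-1}$, one separates the boundedly many indices with $\operatorname{Re}\lambda_{n}$ close to $\omega$ (handled by $|\widehat{\psi}_{n}|\le C'$ and the tail-smallness of $g(\lambda_{n})$) from the remaining, far indices, for which a summation-by-parts / contour argument applies — here the type gap $\sigma<T'/2$, via \eqref{equ:boundedBp} and \eqref{equ:SineTypeH}, forces the off-axis quotient $g/S$ to decay and supplies an exponential damping factor, yielding the bound uniformly in $N$ and $\omega$. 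Making this last estimate precise is the delicate step, and I expect it to be the bulk of the work.
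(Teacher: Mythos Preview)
The paper does not supply a proof of this lemma; it is quoted from \cite{MoenichBoche_SP2010}, with only the remark that the argument there (stated for real sampling points) extends to the case of complex $\lambda_{n}$ lying in a horizontal strip. So there is no ``paper's own proof'' to compare your proposal against.

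Your reduction is the standard one and is correct as far as it goes: the mollification $\Hx_{\delta}=\widehat{k}_{\delta}\,\Hx\in\mathcal{B}^{1}_{T'/2}$ is legitimate (the samples $\{\Hx_{\delta}(\lambda_{n})\}$ lie in $\ell^{1}$ by Plancherel--P{\'o}lya, so Lemma~\ref{lem:ConvBp} applies with $p=1$), and the $C_{0}$-argument for $\|\Hx-\Hx_{\delta}\|_{\infty}\to 0$ is fine. The decomposition $\Hx-P_{N}\Hx=(\Hx-\Hx_{\delta})+(\Hx_{\delta}-P_{N}\Hx_{\delta})+P_{N}(\Hx_{\delta}-\Hx)$ then reduces everything to the uniform operator bound $\sup_{N}\|P_{N}\|_{L^{\infty}\to L^{\infty}}<\infty$ on $C_{0}(\RN)\cap\mathcal{B}^{\infty}_{\sigma}$ for $\sigma<T'/2$.

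But this uniform bound is not a by-product of anything already available in the paper: it is precisely Theorem~5 of \cite{MoenichBoche_SP2010}, which the present paper itself invokes as a black box later, in the proof of Lemma~\ref{lem:InterpolSeriesInfty}. In other words, your proposal reduces the cited lemma to another result from the very same reference, of essentially the same strength. Your closing sketch (near/far index splitting, summation by parts or a contour argument, decay of $g/S$ off the real axis driven by the type gap $\sigma<T'/2$) points in the right direction, but what you have written is a plan, not an argument; turning it into an actual bound uniform in both $N$ and $\omega$ is the entire content of the cited paper, and you should not expect it to be short.
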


\begin{remark}
In \cite{MoenichBoche_SP2010} only the case of real zeros $\lambda_{n}$ was treated.
However, the corresponding proof can also be applied for complex zeros, using that all zeros of a sine-type function lie in a strip parallel to the real axis.
\end{remark}

It now follows that Lemma~\ref{lem:ReconstrTD} may also be formulated for all regular distributions in $\ourSpace^{\infty}$. 
To this end, one has to modify the conditions slightly and suppose that $\Lambda$ is the zero set of a sine-type function of type $T'/2 > T/2$ in order to subsequently use Lemma~\ref{lem:ConvBinfty} in the corresponding proof.

\begin{example}
\label{exa:Shannon}
The zeros of the sine-type function $S(z) = \sin(\frac{T'}{2} z)$ of type $T'/2$ are given by $\lambda_{n} = n \tfrac{2\pi}{T'}$, $n\in\ZN$.
The corresponding functions \eqref{equ:ReconstKernels} are then given by $\widehat{\psi}_n(z) = \sinc(\frac{T'}{2}[z-n\frac{2\pi}{T'}])$ where $\sinc(x) := \sin(x)/x$.
Then the reconstruction series \eqref{equ:ReconstrInB} is known as Shannon sampling series, and it converges for all $\Hx \in \ourB^{p}$ with $1 < p < \infty$, provided that $T \leq T'$.
The corresponding time-domain series \eqref{equ:TimeReconstruct} becomes
$x(t) = \sum_{n\in\ZN} \Hx(\lambda_{n})\, \E^{-\I n \frac{2\pi}{T'} t}$ for all $t\in\TT$.
\end{example}

\section{Measurement Methodology}
\label{sec:Measurement}

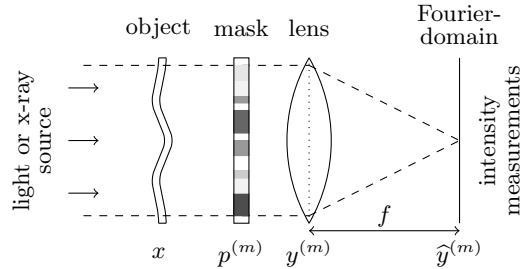
\begin{figure}[t]
\begin{center}
\begin{tikzpicture}
	\draw (-3.8,0) node[rotate = 90] {\footnotesize light or x-ray};
	\draw (-3.5,0) node[rotate = 90] {\footnotesize source};
	\draw[->] (-3.2,-0.7) -- (-2.8,-0.7);
	\draw[->] (-3.2,0) -- (-2.8,0);
	\draw[->] (-3.2,0.7) -- (-2.8,0.7);
	\draw[dashed] (-3,1) -- (0,1);
	\draw[dashed] (-3,-1) -- (0,-1);
	\draw[dashed] (0,1) -- (2,0);
	\draw[dashed] (0,-1) -- (2,0);
	\draw  (-2,1.5) node {\footnotesize object};
	\draw[rounded corners = 1mm] (-2, -1.1) -- (-2,-1) -- (-2.1,-0.5) -- (-1.9,0) -- (-2.1,0.5) -- (-2,1) -- (-2,1.1);
	\draw[rounded corners = 1mm] (-1.9, -1.1) -- (-1.9,-1) -- (-2.0,-0.5) -- (-1.8,0) -- (-2,0.5) -- (-1.9,1) -- (-1.9,1.1);
	\draw (-1.9,-1.1) -- (-2,-1.1);
	\draw (-1.9,1.1) -- (-2,1.1);
	\draw  (-2,-1.5) node {\footnotesize $x$};
	\draw  (-0.9,1.5) node {\footnotesize mask};
	\filldraw[fill = black!70!white, draw = black!70!white] (-1,-1) rectangle (-0.8,-0.7);
	\filldraw[fill = black!05!white, draw = black!05!white] (-1,-0.7) rectangle (-0.8,-0.4);
	\filldraw[fill = black!20!white, draw = black!20!white] (-1,-0.5) rectangle (-0.8,-0.4);
	\filldraw[fill = black!40!white, draw = black!40!white] (-1,-0.2) rectangle (-0.8,-0.0);
	\filldraw[fill = black!60!white, draw = black!60!white] (-1,0.1) rectangle (-0.8,0.4);
	\filldraw[fill = black!40!white, draw = black!40!white] (-1,0.5) rectangle (-0.8,0.6);
	\filldraw[fill = black!05!white, draw = black!05!white] (-1,0.6) rectangle (-0.8,0.8);
	\filldraw[fill = black!10!white, draw = black!10!white] (-1,0.8) rectangle (-0.8,1.0);
	\draw (-1,-1.1) rectangle (-0.8,1.1);
	\draw  (-0.9,-1.5) node {\footnotesize $p^{(m)}$};
	\draw  (0,1.5) node {\footnotesize lens};
	\draw (0,1.1) arc (150:210:2.2);
	\draw (0,-1.1) arc (-30:30:2.2);
	\draw[dotted] (0,-1.1) -- (0,1.1);
	\draw  (0,-1.5) node {\footnotesize $y^{(m)}$};
	\draw  (2,1.7) node {\footnotesize Fourier-};
	\draw  (2,1.4) node {\footnotesize domain};
	\draw (2,1.1) -- (2,-1.1);
	\draw  (2,-1.5) node {\footnotesize $\Hy^{(m)}$};
	\draw[<->] (0,-1.2) -- (2,-1.2);
	\draw (1,-1) node {\footnotesize $f$};
	\draw (2.4,0) node[rotate = 90] {\footnotesize intensity};
	\draw (2.7,0) node[rotate = 90] {\footnotesize measurements};
\end{tikzpicture}
\end{center}
\caption{Schematic setup for structured modulation in optics using spatial light modulators (masks).}
\label{fig:Optics}
\end{figure}

We consider signals in the signal space \eqref{equ:SignalSpace} and apply a measurement methodology which
applies several modulations of the desired signals with complex exponentials (structured modulations) before sampling.
Suppose $x \in \ourSpace^{p}$ is the signal of interest.
When phase loss is intrinsic to the measurement procedure, it is often still possible to modify the desired signal in different ways before the actual measurement.
In optical applications one may apply spatial light modulators for this purpose. An example of a corresponding measurement setup  (see also, e.g.,  \cites{Zhang_ApatureMod07,KatkovnikAstola_12,CandesEldar_PhaseRetrieval}) is sketched in Fig.~\ref{fig:Optics}.
There the object of interest is illuminated by a coherent light source which produces a diffraction pattern $x(t)$\footnote{The variable $t$ stands here for the one or two dimensional spatial dimension.}.
This diffraction pattern is modified by a spatial light modulator (a mask) with complex-valued spatial transmittance $p^{(m)}(t)$. 
In our setup we choose specific masks with transmittance functions $p^{(m)}$ with $m = 1, 2, \dots, M$. 
The resulting diffraction pattern then reads $y^{(m)}(t) = x(t)\, p^{(m)}(t)$.
This pattern is transformed into the frequency domain by the subsequent lens or diffraction imaging (see, e.g., \cite{Goodman_FourierOptik}).
On the measurement plane, the moduli of the frequency transforms $|\Hy^{(m)}(\omega)|^{2} = |(\mathcal{F}y^{(m)})(\omega)|^{2}$ are uniformly sampled at discrete points $\omega_{n} = n\beta$.
A more schematic illustration of the setup is provided in Fig.~\ref{fig:MeasureSetup}. 

We now turn to the choice of $p^{(m)}$. It will be shown that the general form
\begin{equation}
\label{equ:ModSig}
	p^{(m)}(t) = \sum^{K}_{k=1} \overline{\alpha^{(m)}_{k}} \E^{-\I\lambda_k t}
\end{equation}
allows for signal recovery. 
Then the samples in the $m$th branch are given by
\begin{equation}
\label{equ:IntensMeasure}
	c^{(m)}_{n}
	= |\Hy^{(m)}(n \beta)|^{2}
	= \left| \sum^{K}_{k=1} \overline{\alpha^{(m)}_{k}}\,  \Hx(n \beta + \lambda_{k}) \right|^{2} 
	= \left| \langle  \widehat{\bx}_{n},\balpha_m \rangle_{\CN^{K}} \right|^2\;,
	\quad n\in\ZN
\end{equation}
with the length $K$ vectors
\begin{equation}
\label{equ:alpha_xn}
\balpha_m := \left(\begin{array}{c}
	\alpha_1^{(m)} \\ \vdots \\ \alpha_K^{(m)}
	\end{array}\right)
	\qquad \text{and}\qquad
\Hbx_{n} := \left(\begin{array}{c}
	\Hx(n\beta + \lambda_1) \\ \vdots \\ \Hx(n\beta + \lambda_K)
	\end{array}\right)\;.
\end{equation}
One of the main results imply that if the vectors $\balpha_m$, $m=1,\dots,M$ and the interpolation points $\lambda_{n,k} := n\beta + \lambda_{k} : n\in\ZN\;, k=1,\dots,K$ are properly chosen, then it is possible to reconstruct $x$ from all samples $\bc = \{c^{(m)}_{n} : m=1,\dots,M\,; n\in\ZN\}$. 
The reconstruction procedure consists of three steps.
First for each $n\in \ZN$ one determines the vector $\Hbx_{n}$ from the $M$ measurements~\eqref{equ:IntensMeasure} up to a unimodular factor.
Then, one determines and matches the unimodular factors in the sequence of vectors $\{\Hbx_{n}\}$ and finally reconstructs the continuous $x$ from the entries of the vectors $\Hbx_{n}$ using Lemma~\ref{lem:ReconstrTD}.

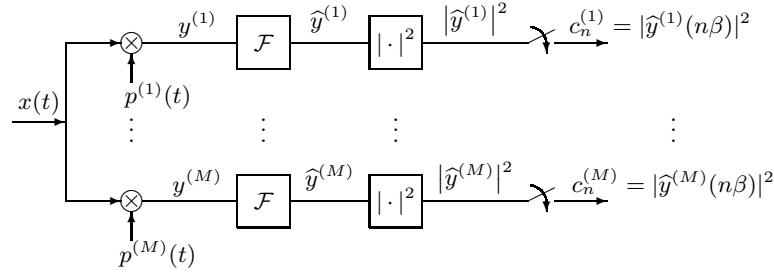
\begin{figure}[t]
\begin{center}
\begin{picture}(290,95)(10,25)
	\put(20,77){\makebox(0,0){\footnotesize $x(t)$}}
	\put(10,70){\vector(1,0){20}}
	\put(30,40){\line(0,1){60}}
	\put(30,100){\vector(1,0){21}}
	\put(55,100){\circle{8}}		\put(55,100){\makebox(0,0){\footnotesize$\times$}}
	\put(55,85){\vector(0,1){11}}	\put(65,80){\makebox(0,0){\footnotesize $p^{(1)}(t)$}}
	\put(59,100){\line(1,0){36}}
	\put(80,107){\makebox(0,0){\footnotesize $y^{(1)}$}}
	\put(55,70){\makebox(0,0){\footnotesize $\vdots$}}
	\put(30,40){\vector(1,0){21}}
	\put(55,40){\circle{8}}   	\put(55,40){\makebox(0,0){\footnotesize$\times$}}
	\put(55,25){\vector(0,1){11}}	\put(65,20){\makebox(0,0){\footnotesize $p^{(M)}(t)$}}
	\put(59,40){\line(1,0){36}}
	\put(80,47){\makebox(0,0){\footnotesize $y^{(M)}$}}
	\put(95,110){\line(1,0){20}}
	\put(95,90){\line(1,0){20}}
	\put(95,90){\line(0,1){20}}
	\put(115,90){\line(0,1){20}}
	\put(105,100){\makebox(0,0){\footnotesize $\mathcal{F}$}}
	\put(115,100){\line(1,0){30}}
	\put(130,109){\makebox(0,0){\footnotesize $\Hy^{(1)}$}}
	\put(145,110){\line(1,0){20}}
	\put(145,90){\line(1,0){20}}
	\put(145,90){\line(0,1){20}}
	\put(165,90){\line(0,1){20}}
	\put(155,100){\makebox(0,0){\footnotesize $\left|\, \cdot\, \right|^{2}$}}
	\put(165,100){\line(1,0){40}}	
	\put(185,109){\makebox(0,0){\footnotesize $\big|\Hy^{(1)}\big|^{2}$}}
	\put(105,70){\makebox(0,0){\footnotesize $\vdots$}}
	\put(155,70){\makebox(0,0){\footnotesize $\vdots$}}	
	\put(95,50){\line(1,0){20}}
	\put(95,30){\line(1,0){20}}
	\put(95,30){\line(0,1){20}}
	\put(115,30){\line(0,1){20}}
	\put(105,40){\makebox(0,0){\footnotesize $\mathcal{F}$}}
	\put(115,40){\line(1,0){30}}
	\put(130,49){\makebox(0,0){\footnotesize $\Hy^{(M)}$}}
     \put(145,50){\line(1,0){20}}
	\put(145,30){\line(1,0){20}}
	\put(145,30){\line(0,1){20}}
	\put(165,30){\line(0,1){20}}
	\put(155,40){\makebox(0,0){\footnotesize $\left|\, \cdot\, \right|^{2}$}}
	\put(165,40){\line(1,0){40}}
	\put(185,49){\makebox(0,0){\footnotesize $\big|\Hy^{(M)}\big|^{2}$}}
	\put(205,100){\line(2,1){10}}
	\qbezier(206,106)(211,105)(212,100)	\put(212,98){\vector(0,-1){2}}
	\put(215,100){\vector(1,0){20}}
	\put(255,107){\makebox(0,0){\footnotesize $c^{(1)}_{n} = |\Hy^{(1)}(n\beta)|^{2}$}}
	\put(260,70){\makebox(0,0){\footnotesize $\vdots$}}		
	\put(205,40){\line(2,1){10}}
	\qbezier(206,46)(211,45)(212,40)		\put(212,38){\vector(0,-1){2}}
	\put(215,40){\vector(1,0){20}}
	\put(260,47){\makebox(0,0){\footnotesize $c^{(M)}_{n}  = |\Hy^{(M)}(n\beta)|^{2}$}}
\end{picture}
\end{center}
\caption{Measurement setup: In each branch, the unknown signal $x$ is modulated with a different sequence $p^{(m)}$, $m=1,2,\dots,M$.
Subsequently, the intensities of the resulting signals $y^{(m)}$ are measured and uniformly sampled in the frequency domain.}
\label{fig:MeasureSetup}
\end{figure}

\subsection{Choice of the modulation coefficients $\alpha^{(m)}_{k}$} 
\label{sec:Balan}
In order to determine the vector $\Hbx_{n} \in \CN^{K}$ from the $M$ intensity measurements \eqref{equ:IntensMeasure}, we apply a result from \cite{Balan_Painless}.
It states that if the family of $\CN^{K}$-vectors $\mathcal{A} = \{\balpha_1, \dots,\balpha_M \}$ constitutes a $2$-uniform $M/K$-tight frame which contains $M = K^{2}$ vectors or if $\mathcal{A}$ is a union of $K+1$ mutually unbiased bases in $\CN^{K}$, then every $\Hbx_{n} \in \CN^{K}$ can be reconstructed up to a constant phase from the magnitude of the inner products \eqref{equ:IntensMeasure}.
We only discuss the first case here and therefore fix $M = K^{2}$. The adaption to the second case is obvious.
\begin{condA}
A set $\{\alpha^{(m)}_{k}\}$ of coefficients in \eqref{equ:ModSig} is said to satisfy \emph{Condition~A} if the set $\{\balpha_1, \dots,\balpha_M \}$ of vectors in $\CN^{K}$, as defined in \eqref{equ:alpha_xn}, constitutes a $2$-uniform $M/K$-tight frame with $M = K^2$ vectors.
\end{condA}

Reconstruction will then be based on the following result:
\begin{theorem}[\cites{Balan_Painless,Levenshtein_98}]
Let $\{\balpha_1, \dots,\balpha_M \}$ be a uniform $M/K$-tight frame in $\CN^{K}$ with $M = K^{2}$ vectors.
Then for every Hermitian rank-$1$ matrix $Q_{n} = \Hbx_{n}\Hbx^{*}_{n}$ we have
\begin{equation}
\label{equ:BalanReconstruct}
Q_{n} 
= \frac{K+1}{K}\sum_{m=1}^{M} \left| \langle  \widehat{\bx}_{n},\balpha_m \rangle_{\CN^{K}} \right|^2
  \, \balpha_m\balpha^{*}_m - \frac{1}{K}\sum_{m=1}^{M}c_n^{(m)}I.
\end{equation}
\end{theorem}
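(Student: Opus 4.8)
The plan is to derive the formula from a single second-order identity satisfied by every $2$-uniform $M/K$-tight frame with $M=K^{2}$ vectors. Write $P_{m}:=\balpha_{m}\balpha_{m}^{*}$ for the rank-one self-adjoint operators on $\CN^{K}$ associated with the frame vectors. Since the frame is tight and uniform, the $\balpha_{m}$ have unit norm, so $\sum_{m=1}^{M}P_{m}=\tfrac{M}{K}\,I$; since it is $2$-uniform it is equiangular, and for an equal-norm tight equiangular frame with $M=K^{2}$ vectors the common value is $|\langle\balpha_{m},\balpha_{m'}\rangle|^{2}=\frac{M-K}{K(M-1)}=\frac{1}{K+1}$ (the Welch bound is attained). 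A short computation of the second frame potential then gives $\sum_{m,m'}|\langle\balpha_{m},\balpha_{m'}\rangle|^{4}=\frac{2K^{3}}{K+1}=\frac{2M^{2}}{K(K+1)}$, which is its minimal possible value; by the sharpness part of Levenshtein's bound this is equivalent to $\{P_{m}\}$ being a complex projective $2$-design, i.e. $\tfrac1M\sum_{m}P_{m}\otimes P_{m}=\tfrac{2}{K(K+1)}\,\Pi$, where $\Pi$ is the orthogonal projection onto the symmetric subspace of $\CN^{K}\otimes\CN^{K}$. Taking the partial trace over the second tensor factor of this identity applied to $I\otimes uu^{*}$ yields, for every $u\in\CN^{K}$,
\begin{equation}
\label{equ:2design}
	\sum_{m=1}^{M}|\langle u,\balpha_{m}\rangle_{\CN^{K}}|^{2}\,\balpha_{m}\balpha_{m}^{*}
	=\frac{K}{K+1}\bigl(uu^{*}+\|u\|^{2}\,I\bigr)\;;
\end{equation}
alternatively one may simply quote \eqref{equ:2design}, or the $2$-design property of $\{\balpha_{m}\balpha_{m}^{*}\}$, from \cite{Balan_Painless}.

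Granting \eqref{equ:2design}, the rest is immediate. I would apply it with $u=\Hbx_{n}$: by \eqref{equ:IntensMeasure} the left-hand side is $\sum_{m=1}^{M}c^{(m)}_{n}\,\balpha_{m}\balpha_{m}^{*}$, while $uu^{*}=\Hbx_{n}\Hbx_{n}^{*}=Q_{n}$ and $\|u\|^{2}=\mathrm{tr}\,Q_{n}=\|\Hbx_{n}\|^{2}$, so that
\begin{equation*}
	\sum_{m=1}^{M}c^{(m)}_{n}\,\balpha_{m}\balpha_{m}^{*}
	=\frac{K}{K+1}\bigl(Q_{n}+\|\Hbx_{n}\|^{2}\,I\bigr)\;.
\end{equation*}
Solving for $Q_{n}$ gives $Q_{n}=\tfrac{K+1}{K}\sum_{m}c^{(m)}_{n}\,\balpha_{m}\balpha_{m}^{*}-\|\Hbx_{n}\|^{2}\,I$. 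It remains to express $\|\Hbx_{n}\|^{2}$ through the measurements: from the tight-frame identity $\sum_{m}\balpha_{m}\balpha_{m}^{*}=\tfrac{M}{K}I=K\,I$ (here $M=K^{2}$) one gets $\sum_{m}c^{(m)}_{n}=\bigl\langle\Hbx_{n},\bigl(\sum_{m}\balpha_{m}\balpha_{m}^{*}\bigr)\Hbx_{n}\bigr\rangle_{\CN^{K}}=K\,\|\Hbx_{n}\|^{2}$, hence $\|\Hbx_{n}\|^{2}=\tfrac1K\sum_{m}c^{(m)}_{n}$. Substituting this into the previous line reproduces exactly \eqref{equ:BalanReconstruct}.

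The one real obstacle is the justification of \eqref{equ:2design}; everything after it is routine linear algebra. That identity encodes fourth-moment information about the frame, which is \emph{not} a consequence of tightness alone: it is precisely the content of the ``$2$-uniform'' hypothesis together with the dimension count $M=K^{2}$, which is what forces the configuration to be a (SIC-type) projective $2$-design. If one is content to take the $2$-design property of $\{\balpha_{m}\balpha_{m}^{*}\}$ as the known characterisation of a $2$-uniform $M/K$-tight frame with $M=K^{2}$ vectors — citing \cite{Balan_Painless} and \cite{Levenshtein_98} — then the argument above is complete, and the adaptation to the union of $K+1$ mutually unbiased bases is identical, that system also being a $2$-design.
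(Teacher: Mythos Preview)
The paper does not provide its own proof of this theorem; it is stated with citations to \cite{Balan_Painless} and \cite{Levenshtein_98} and used as a black box. Your argument is a correct and essentially standard reconstruction of the cited result: the crucial point is indeed that a $2$-uniform $M/K$-tight frame with $M=K^{2}$ vectors is a complex projective $2$-design, which yields the second-moment identity $\sum_{m}|\langle u,\balpha_{m}\rangle|^{2}\,\balpha_{m}\balpha_{m}^{*}=\tfrac{K}{K+1}\bigl(uu^{*}+\|u\|^{2}I\bigr)$, and after that the derivation of \eqref{equ:BalanReconstruct} is routine linear algebra exactly as you present it. One small note: the theorem as printed says ``uniform'' rather than ``$2$-uniform''; this is evidently a slip (cf.\ Condition~A and the surrounding discussion), and your reading of the hypothesis as $2$-uniform is the correct one---as you rightly observe, tightness alone cannot give a fourth-moment identity.
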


Constructions of 2-uniform $M/K$-tight frames with $M=K^2$ vectors and for different dimensions $K$ can be found in \cite{Zauner_Quantendesigns}.
For $K=2$ such a frame is given by \cite{Balan_Painless}
\begin{equation*}
	\balpha_1 =  \binom{a}{b},\
	\balpha_2 =  \binom{b}{a},\
	\balpha_3 =  \binom{a}{-b},\
	\balpha_4 =  \binom{-b}{a}
\end{equation*}
with $a = \sqrt{\frac{1}{2}(1-\frac{1}{\sqrt{3}})}$ and $b = \E^{\I 5\pi/4}\sqrt{\frac{1}{2}(1+\frac{1}{\sqrt{3}})}$.

\subsection{Choice of the interpolation points $\lambda_{k}$}
\label{sec:Sampling}

Let $\{\lambda_k\}_{k=1}^K$ be ordered increasingly by their real parts.
For each $n\in\ZN$, the vector $\Hbx_{n}$ in \eqref{equ:alpha_xn} contains the values of $\Hx$ at $K$ distinct interpolation points in the complex plane collected in the sequences
\begin{equation}
\label{equ:SampSets}
	\blambda^a_n := \{\lambda^a_{n,k}\}^{K}_{k = 1}
	\quad\text{with}\quad
	\lambda^a_{n,k} =  n\beta + \lambda_k\;,
	\quad n\in\ZN\;.
\end{equation}
Therein, the parameter $a\in\NN$ denotes the number of overlapping points of consecutive sets \eqref{equ:SampSets} (cf. Fig.~\ref{fig:SpampPoints}).
More precisely, we require for every $n\in\ZN$ that
\begin{equation}
\label{equ:CondSampPoint}
\lambda^a_{n,i} = \lambda^a_{n-1,K-i+1}\quad \text{for all}\ i = 1,\dots, a\;.
\end{equation}
In the following $\Lambda_{O,n}^a = \blambda^a_n \cap \blambda^a_{n+1}$ denotes the set of overlapping interpolation points between $\blambda^a_n$ and $\blambda^a_{n+1}$,
and we define the overall interpolation sequence
\begin{eqnarray}
\label{equ:Lambda_a}
& \Lambda^a := \bigcup_{n\in\ZN}\blambda^a_n
   = \bigcup_{n\in\ZN} \{n\beta + \lambda_{k}\}^{K}_{k=1}\;.
\end{eqnarray}
In general we allow for $a\geq 1$, but we will see that $a=1$ is generally sufficient for reconstruction.

As explained in Section~\ref{sec:InterpolSeq}, $x\in\ourSpace^{p}$ can be perfectly reconstructed by \eqref{equ:TimeReconstruct} if $\Lambda^a$ is complete interpolating for $\ourB^{p}$.
Consequently, we require the following condition to hold for the interpolation points $\{\lambda_{k}\}$.

\begin{figure}[t]
\begin{center}
\begin{tikzpicture}
	\draw[->] (-3.9,0) -- (3.9,0);
	\draw (3.8,-0.3) node {$\xi$};
	\draw[->] (0,-1.0) -- (0,1.7);
	\draw (0.2,1.6) node {$\eta$};
	\draw[dotted] (-2,-0.9) -- (-2,1.4);
	\draw[dotted] (2,-0.9) -- (2,1.0);
	\draw[<->,dashed] (-2,-0.7) -- (0,-0.7);
	\draw (-1,-0.9) node {$\beta$};
	\draw[black, thick] (-3.5,0.0) circle (0.05);
	\draw[black, thick] (-3.0,0.3) circle (0.05);
	\draw[black, thick] (-2.5,-0.4) circle (0.05);
	\draw[black, thick] (-2.0,0.6) circle (0.05);
	\draw[black, thick] (-1.5,0.0) circle (0.05);		
	\draw[black, thick] (-2.0,0.6) circle (0.05);
	\draw[black, thick] (-1.5,0.0) circle (0.05);
	\draw[black, thick] (-1.0,0.3) circle (0.05);
	\draw[black, thick] (-0.5,-0.4) circle (0.05);
	\draw[black, thick] (0.0,0.6) circle (0.05);
	\draw[black, thick] (0.5,0.0) circle (0.05);
	\filldraw[fill = black!20!white, fill opacity=0.3, draw = black, dotted, rounded corners = 1mm]
	           (-2.15,0.6) -- (-2.0,0.8) -- (-1.5,0.3) -- (-1.0,0.5) -- (-0.5,0.5) -- (0.0,0.8) -- (0.5,0.2) -- (0.7,0.0) --
	           (0.5,-0.2) -- (0.0,-0.1) -- (-0.5,-0.6) -- (-1.0,-0.2) -- (-1.5,-0.2) -- (-2.0,0.2) -- (-2.15,0.6);
	\draw[black] (-0.45,0.21) node{$\blambda^{a}_{n-1}$};
	\filldraw[black, thick] (0.0,0.6) circle (0.05);
	\filldraw[black, thick] (0.5,0.0) circle (0.05);
	\draw[black, thick] (1.0,0.3) circle (0.05);
	\draw[black, thick] (1.5,-0.4) circle (0.05);
	\draw[black, thick] (2.0,0.6) circle (0.05);
	\draw[black, thick] (2.5,0.0) circle (0.05);
	\filldraw[fill = black!05!white, fill opacity=0.3, draw = black, dashed, rounded corners = 1mm]
	           (-0.15,0.6) -- (0.0,0.8) -- (0.5,0.3) -- (1.0,0.5) -- (1.5,0.5) -- (2.0,0.8) -- (2.5,0.2) -- (2.7,0.0) --
	           (2.5,-0.2) -- (2.0,-0.1) -- (1.5,-0.6) -- (1.0,-0.2) -- (0.5,-0.2) -- (0.0,0.2) -- (-0.15,0.6);
	\draw[black] (1.5,0.21) node{$\blambda^{a}_{n}$};
	\draw[black, thick] (2.0,0.6) circle (0.05);
	\draw[black, thick] (2.5,0.0) circle (0.05);
	\draw[black, thick] (3.0,0.3) circle (0.05);
	\draw[black, thick] (3.5,-0.4) circle (0.05);
	\draw[->] (1.2,1.2) -- (0.1,0.7);
	\draw[black] (2.2,1.3) node{\footnotesize $\lambda^{a}_{n-1,5} = \lambda^{a}_{n,1}$};
\end{tikzpicture}
\end{center}
\caption{Illustration for the choice of interpolation points in the complex plane for $K=6$ in \eqref{equ:ModSig} and an overlap $a = 2$.}
\label{fig:SpampPoints}
\end{figure}
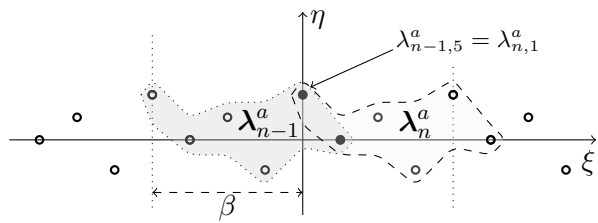

\begin{condB}
A the set $\{\lambda_{k}\}^{K}_{k=1}$ of coefficients in \eqref{equ:ModSig} is said to satisfy Condition~B if there exists a $\beta > 0$ such that
the set $\Lambda^{a}$, defined in \eqref{equ:Lambda_a}, 
is complete interpolating for $\ourB^{p}$ and satisfies \eqref{equ:CondSampPoint} for a certain $1 \leq a < K$.
\end{condB}

If $\{\lambda_{k}\}^{K}_{k=1}$ satisfies Condition~B, then the overall interpolation sequence $\Lambda^{a}$ in \eqref{equ:Lambda_a} is $\beta$-periodic.
One specific way to obtain such sequences is to choose $\Lambda^a$ as the set of zeros of a $\beta$-periodic sine-type function of type $\widetilde{T}/2 \geq T/2$. 
Based on such a zero set, one may modify the imaginary part of the individual interpolation points or one can perturb the individual interpolation points slightly (according to Katsnelson's theorem) without changing the complete interpolating property and such that $\beta$-periodicity is preserved.
This way, it is possible to construct many different sets $\{\lambda_{k}\}^{K}_{k=1}$ which satisfy Condition~B. One particularly simple construction is obtained by starting with the zeros of the sine-type function $\sin( \widetilde{T}/2\, z )$, which has equally spaced zeros on the real axis (cf. Example~\ref{exa:Shannon}) and consequently $\beta$ being an integer multiple of $4\pi/\widetilde{T}$.

\section{Phaseless Signal Recovery}
\label{sec:MainResult}

We assume a sampling scheme as described in Section~\ref{sec:Measurement} (cf.~Fig.~\ref{fig:MeasureSetup})
with modulating functions $p^{(m)}$ of the form \eqref{equ:ModSig} which satisfy Condition~A and B.
For this setup, we show that every signal in a countable intersection of open dense subsets of $\ourSpace^{p}$ can be reconstructed from the samples in \eqref{equ:IntensMeasure}.
The proof provides an explicit algorithm for signal recovery. In principle, it consists of a three step procedure.
First, a finite block of $K$ samples of the Fourier domain signal $\Hx$ is reconstructed up to a constant phase factor using the $m=1,\dots,M$ intensity measurements \eqref{equ:IntensMeasure} taken at sampling instants $n$. Since Condition A is fulfilled, we can apply a finite dimensional phase retrieval algorithm to achieve that. 
In the second step we exploit that by our construction of the interpolation points, consecutive blocks have an overlap. Therewith, it is possible to make the unimodular factors consistent over all blocks.
Finally Lemma~\ref{lem:ReconstrTD} is applied to interpolate the continuous signal, utilizing Condition~B of the sampling scheme.

\begin{theorem}
\label{thm:MainThm}
For any $K\geq 2$, let $\{\alpha^{(m)}_{k}\}$ be a set of modulation coefficients which satisfy Condition~A,
and let $\{\lambda_{k}\}^{K}_{k=1}$ be a set which satisfies Condition~B.
For $1 < p < \infty$, let $x \in \ourSpace^{p}$.
If for each $n\in\ZN$ the set $\Hx(\Lambda^a_{O,n})$ contains at least one non-zero element, then $x$ can be perfectly reconstructed from the intensity measurements
\begin{equation}
\label{equ:IntensMeasure_Thm}
	c^{(m)}_{n}
	= \left| \sum^{K}_{k=1} \overline{\alpha^{(m)}_{k}}\,  \Hx(n \beta + \lambda_{k}) \right|^{2}\;, 
	\quad
	\begin{array}{l}
	   m=1,\dots,K^{2}\\
	   n\in\ZN
	\end{array}
\end{equation}
up to a constant phase.
\end{theorem}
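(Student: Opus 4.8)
The plan is to carry out the three-step scheme sketched just before the theorem. First I would recover, for each fixed $n\in\ZN$, the rank-one matrix $Q_n=\Hbx_n\Hbx_n^{*}$ from the $K^2$ intensities $c^{(m)}_n$. This is immediate from Condition~A: since $\{\balpha_1,\dots,\balpha_{K^2}\}$ is a $2$-uniform $M/K$-tight frame with $M=K^2$, substituting the measured numbers into the right-hand side of \eqref{equ:BalanReconstruct} returns $Q_n$ exactly. Before extracting a vector I would observe that $\Hbx_n\neq0$ under the hypothesis: if $\Hbx_n=0$ then $\Hx$ vanishes on $\blambda^a_n$, hence on $\Lambda^a_{O,n}\subseteq\blambda^a_n$, contradicting that $\Hx(\Lambda^a_{O,n})$ contains a non-zero element. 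So $Q_n\neq0$ is Hermitian of rank one, and choosing any column index $j$ with $Q_n(j,j)>0$ yields $\Hbx_n=\E^{\I\varphi_n}\,\Hbv_n$ with the recoverable vector $\Hbv_n:=Q_n(\cdot,j)/\sqrt{Q_n(j,j)}$ and an unknown phase $\varphi_n$.

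Next I would glue the phases $\varphi_n$ together using the overlap structure \eqref{equ:CondSampPoint} guaranteed by Condition~B. Normalize $\varphi_0:=0$. For any $n$ the set $\Lambda^a_{O,n}=\blambda^a_n\cap\blambda^a_{n+1}$ is non-empty and, by hypothesis, contains a point $\mu$ with $\Hx(\mu)\neq0$; by \eqref{equ:CondSampPoint} this $\mu$ occupies position $i$ in block $n+1$ and position $K-i+1$ in block $n$ for some $i\in\{1,\dots,a\}$. Equating the two expressions for $\Hx(\mu)$ coming from $\Hbx_n=\E^{\I\varphi_n}\Hbv_n$ and $\Hbx_{n+1}=\E^{\I\varphi_{n+1}}\Hbv_{n+1}$ gives
\begin{equation*}
	\E^{\I(\varphi_{n+1}-\varphi_n)}=\frac{(\Hbv_n)_{K-i+1}}{(\Hbv_{n+1})_{i}}\;,
\end{equation*}
where both entries are non-zero since their modulus equals $|\Hx(\mu)|>0$, so the right-hand side is computable from the already reconstructed data. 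Propagating this identity to positive and negative $n$ determines every $\E^{\I\varphi_n}$; hence the whole family of samples $\{\Hx(\lambda):\lambda\in\Lambda^a\}$ is now known up to one common unimodular constant $\E^{\I\varphi_0}$.

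Finally I would invoke Condition~B once more: since $\Lambda^a$ is complete interpolating for $\ourB^p$, the values just obtained determine a unique $\Hx\in\ourB^p$, furnished explicitly by Lemma~\ref{lem:ConvBp}; transforming back with $\mathcal{F}^{-1}$, equivalently applying Lemma~\ref{lem:ReconstrTD}, recovers $x\in\ourSpace^p$. Since every step propagated only the single ambiguity $\E^{\I\varphi_0}$, the signal is reconstructed up to a constant phase.

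I expect the phase-synchronization step to be the crux. Steps one and three are essentially bookkeeping on top of the cited finite-dimensional result and the interpolation lemmas of Section~\ref{sec:InterpolSeq}; the real content is that a \emph{single} phase reference can be transported consistently along the bi-infinite chain of overlapping blocks, and this is exactly where the non-vanishing hypothesis on $\Hx(\Lambda^a_{O,n})$ enters — if some overlap set carried only zeros, the chain would break there. This also explains why the statement is restricted to signals in a countable intersection of open dense subsets; lifting that restriction is the subject of Section~\ref{sec:SubSpaces}.
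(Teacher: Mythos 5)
Your proposal is correct and follows essentially the same three-step route as the paper's own proof: recover $Q_n=\Hbx_n\Hbx_n^{*}$ via \eqref{equ:BalanReconstruct} under Condition~A, propagate a single phase reference through the non-vanishing overlap points guaranteed by the hypothesis and \eqref{equ:CondSampPoint}, and then interpolate via the complete interpolating property of $\Lambda^a$ and Lemma~\ref{lem:ReconstrTD}. Your explicit phase-ratio formula and the observation that $\Hbx_n\neq0$ are slightly more detailed than the paper's presentation, but the argument is the same.
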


\begin{proof}
Since $\{\lambda_{k}\}^{K}_{k=1}$ satisfies Condition~B, there exists a $\beta > 0$ such that $\Lambda^{a}$ is complete interpolating for $\ourB^{p}$. Therefore the signal $x$ can be reconstructed from the entries of the vectors $\{\Hbx_{n}\}_{n\in\ZN}$, defined in \eqref{equ:alpha_xn}, using \eqref{equ:TimeReconstruct}.
It remains to show that $\{\Hbx_{n}\}_{n\in\ZN}$ can be determined from measurements \eqref{equ:IntensMeasure_Thm}.

Let $n\in\ZN$ be arbitrary. Since $\{\alpha^{(m)}_{k}\}$ satisfies Condition~A, we can directly use the relation \eqref{equ:BalanReconstruct} to determine the rank-$1$ matrix $Q_{n} := \Hbx_n\Hbx_n^*$ from the measurements $\{c^{(m)}_{n}\}^{M}_{m=1}$. Then $\Hbx_n \in \CN^K$ is obtained by factorizing $Q_{n}$. However, such a factorization is only unique up to a constant phase factor.
If the phase $\phi_{n,i}$ of one element $[\Hbx_n]_i$ is known, the vector $\Hbx_n$ can be completely determined from $Q_{n}$ by
\begin{equation}
\label{equ:factorisation}
\Hx(n\beta +\lambda_k) = \sqrt{[Q_{n}]_{k,k}}\, \E^{\I(\phi_{n,i} - \arg([Q_{n}]_{i,k}))}, \ \forall k \neq i\,.
\end{equation}
Let the recovery of the sequence $\{\Hbx_{n}\}_{n\in\ZN}$ now start at an arbitrary $n_{0}\in\ZN$.
In this initial step, we set the constant phase of $\Hbx_{n_{0}}$ arbitrarily to $\theta_{0} \in [-\pi,\pi]$.
In the next step, we determine $\Hbx_{n_{0}+1}$. After the factorization of $Q_{n_{0}+1}$, the vector $\Hbx_{n_{0}+1}$ is only determined up to a constant phase. 
However, since $\Lambda^{a}_{O,n_0}$ is non-empty, and because $\Hx(\Lambda^a_{O,n_{0}})$ contains at least one non-zero element, we have phase knowledge of at least one entry of $\Hbx_{n_{0}+1}$, say $\Hx(\lambda^a_{n_{0}+1,i})$, where $\lambda^a_{n_{0}+1,i}$ is an overlapping interpolation point of $\blambda^a_{n_{0}}$ and $\blambda^a_{n_{0}+1}$.
Thus, we can completely determine $\Hbx_{n_{0}+1}$ and successively all  $n = n_{0}\pm 1, n_{0}\pm 2, \dots$ using \eqref{equ:factorisation} to obtain $\Hx(\Lambda^a)\, \E^{\I \theta_0}$.

The arbitrary phase $\theta_{0}$ of the initial vector $\Hbx_{n_0}$ yields a global constant phase shift $\theta_0$ for all $\Hbx_{n}$ which persists after the reconstruction of the time signal by the interpolation formula \eqref{equ:TimeReconstruct}.
\end{proof}

\begin{remark}
In the case $p=1$ and $p=\infty$, for regular distributions $x \in \L^1(\TT)$ and correspondingly $\Hx \in C_0 \cap \ourB^{\infty}$ or $\Hx \in C_0 \cap \ourB^1$, reconstruction is also possible according to Lemma \ref{lem:ConvBinfty}. Then however, Condition B has to be reformulated such that the interpolation points $\Lambda^{a}$ are the zero set of a sine-type function of type $T'/2>T/2$.
\end{remark}

Theorem~\ref{thm:MainThm} states that $x\in\ourSpace^{p}$ can only be reconstructed if $\Hx = \mathcal{F} x \in\ourB^{p}$ has at most $a-1$ zeros on the overlapping interpolation sets $\Lambda^{a}_{O,n}$.
Thus, the set $\mathcal{G}$ of all $\Hx \in \ourB^{p}$ which have $a$ zeros in at least one set $\Lambda^{a}_{O,n}$ contains all those functions for which the reconstruction procedure of Theorem~\ref{thm:MainThm} will fail. We now show that $\mathcal{G}$ is small in a certain sense. 

\begin{lemma}
The set $\mathcal{G}$ of all $\Hx \in \ourB^{p}$ for which the reconstruction procedure of Theorem~\ref{thm:MainThm} fails is of first category.
\end{lemma}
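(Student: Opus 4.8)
The plan is to show that $\mathcal{G}$ is contained in a countable union of closed sets, each of which is nowhere dense in $\ourB^{p}$. Write $\mathcal{G} = \bigcup_{n\in\ZN} \mathcal{G}_n$, where $\mathcal{G}_n := \{\Hx \in \ourB^{p} : \Hx \text{ vanishes at all } a \text{ points of } \Lambda^{a}_{O,n}\}$. Since each $\Lambda^{a}_{O,n}$ is a finite set of $a$ distinct points in $\CN$, each $\mathcal{G}_n$ is the intersection of $a$ point-evaluation kernels, hence a linear subspace of $\ourB^{p}$; it suffices to show each $\mathcal{G}_n$ is closed and has empty interior, since a countable union of nowhere dense sets is of first category by definition.

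First I would establish that $\mathcal{G}_n$ is \emph{closed}. Point evaluation $\Hx \mapsto \Hx(\lambda)$ at a fixed $\lambda = \xi + \I\eta \in \CN$ is a bounded linear functional on $\ourB^{p}$: by \eqref{equ:boundedBp} we have $|\Hx(\lambda)| \leq C\, \E^{(T/2)|\eta|}\, \|\Hx\|_{\ourB^{p}}$. Hence $\mathcal{G}_n = \bigcap_{\lambda \in \Lambda^{a}_{O,n}} \ker(\Hx \mapsto \Hx(\lambda))$ is an intersection of kernels of continuous functionals, therefore a closed subspace.

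Next I would show $\mathcal{G}_n$ has \emph{empty interior}, equivalently that $\mathcal{G}_n$ is a proper closed subspace (a proper closed subspace of a normed space always has empty interior). Here Condition~B is the key: since $\Lambda^{a}$ is complete interpolating for $\ourB^{p}$ and $\Lambda^{a}_{O,n} \subset \Lambda^{a}$ is a finite subset, the interpolation property \eqref{equ:interpolProb} guarantees the existence of some $\Hx \in \ourB^{p}$ which takes prescribed nonzero values on $\Lambda^{a}_{O,n}$ (choose the sequence $\{c_m\}_{m\in\ZN} \in \ell^{p}$ to be nonzero on the indices corresponding to $\Lambda^{a}_{O,n}$ and zero elsewhere); such an $\Hx$ is not in $\mathcal{G}_n$, so $\mathcal{G}_n \neq \ourB^{p}$. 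Being a proper closed linear subspace, $\mathcal{G}_n$ is nowhere dense. Consequently $\mathcal{G} = \bigcup_{n\in\ZN}\mathcal{G}_n$ is a countable union of nowhere dense closed sets, i.e. of first category.

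I do not anticipate a genuine obstacle here; the only point requiring care is the bookkeeping between the index set $\ZN$ used for the sampling instants $n$ and the index set $\ZN$ used to enumerate $\Lambda^{a}$ in \eqref{equ:Lambda_a}, so that "prescribing values on $\Lambda^{a}_{O,n}$ via a sequence in $\ell^p$" is literally an instance of \eqref{equ:interpolProb}; this is immediate once one fixes the enumeration of $\Lambda^{a}$. One should also note the argument is purely about the existence of a single interpolant, so no norm estimates beyond \eqref{equ:boundedBp} are needed.
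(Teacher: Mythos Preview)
Your proposal is correct and follows essentially the same approach as the paper: both arguments use continuity of point evaluations (via \eqref{equ:boundedBp}) to get closedness, and an explicit interpolant to show properness/density of the complement. The only cosmetic differences are that the paper decomposes over single interpolation points $\lambda_n \in \Lambda^a$ (obtaining $\mathcal{G} \subset \bigcup_n \{\Hx(\lambda_n)=0\}$) rather than over the overlap sets $\Lambda^a_{O,n}$, and it shows density of the complement by the explicit perturbation $\Hy + (\epsilon/C_2)\widehat{\psi}_n$ rather than invoking the abstract fact that a proper closed subspace is nowhere dense.
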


\begin{proof}
Let $\Lambda^{a} = \{\lambda_{n}\}_{n\in\ZN}$ be a set of interpolation points as applied in Theorem~\ref{thm:MainThm} and set $\mathcal{H}_{n} := \{ \Hx \in \ourB^{p}\ :\ \Hx(\lambda_{n}) \neq 0 \}$
for every $n\in\ZN$.
It follows from Plancherel-P{\'o}lya~\eqref{equ:boundedBp} that the point evaluations $\Hx \mapsto \Hx(z)$ are continuous on $\ourB^{p}$, and so $\mathcal{H}_n$ is open. 
Moreover, $\mathcal{H}_n$ is dense in $\ourB^p$ because to every $\Hy \in \ourB^{p}$ with $\Hy(\lambda_{n}) = 0$ and for every $\epsilon > 0$ we can find $\Hx = \Hy + (\epsilon/C_{2})\, \widehat{\psi}_{n} \in \mathcal{H}_{n}$,  where $\widehat{\psi}_{n}$ and $C_{2}$ are given in Lemma~\ref{lem:ConvBp},
and such that $\|\Hx - \Hy\| = (\epsilon/C_{2})\, \|\widehat{\psi}_{n}\| \leq \epsilon$.
Hence, the complements $\mathcal{G}_n := \mathcal{H}_n^C$ and $\bigcup_{n\in\ZN}\mathcal{G}_n$ are nowhere dense. Since $\mathcal{G} \subset \bigcup_{n\in\ZN} \mathcal{G}_{n}$, $\mathcal{G}$ is of first category.
\end{proof}

\section{Signal Reconstruction in Subspaces}
\label{sec:SubSpaces}

The restriction on the signal space given in Theorem~\ref{thm:MainThm} is fairly mild.
However, in order to avoid even such pathological cases, we may a priori minimally restrict the function space allowed in Theorem~\ref{thm:MainThm} to prevent the measured signal $\Hx$ from having zeros in $\Lambda^a$.
From the practical point of view, this can be achieved by adding a known test signal $u$ to the desired signal $x \in \ourSpace^{p}$ prior to the structured modulations, intensity measurement and sampling (cf. Fig.\ref{fig:MeasureSetup2}).

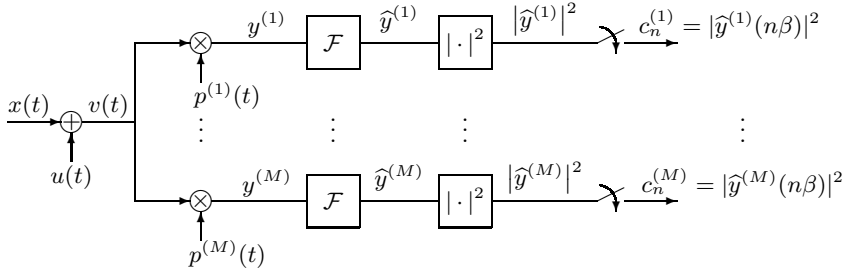
\begin{figure}[t]
\begin{center}
\begin{picture}(280,95)(10,25)
	\put(-18,70){\vector(1,0){20}}
	\put(-10,76){\makebox(0,0){\footnotesize $x(t)$}}
	\put(6,70){\circle{8}}		\put(6,70){\makebox(0,0){\footnotesize$+$}}
	\put(6,55){\vector(0,1){11}}	\put(6,49){\makebox(0,0){\footnotesize $u(t)$}}
	\put(20,76){\makebox(0,0){\footnotesize $v(t)$}}
	\put(10,70){\line(1,0){20}}
	\put(30,40){\line(0,1){60}}
	\put(30,100){\vector(1,0){21}}
	\put(55,100){\circle{8}}		\put(55,100){\makebox(0,0){\footnotesize$\times$}}
	\put(55,85){\vector(0,1){11}}	\put(65,80){\makebox(0,0){\footnotesize $p^{(1)}(t)$}}
	\put(59,100){\line(1,0){36}}
	\put(80,107){\makebox(0,0){\footnotesize $y^{(1)}$}}
	\put(55,70){\makebox(0,0){\footnotesize $\vdots$}}
	\put(30,40){\vector(1,0){21}}
	\put(55,40){\circle{8}}   	\put(55,40){\makebox(0,0){\footnotesize$\times$}}
	\put(55,25){\vector(0,1){11}}	\put(65,20){\makebox(0,0){\footnotesize $p^{(M)}(t)$}}
	\put(59,40){\line(1,0){36}}
	\put(80,47){\makebox(0,0){\footnotesize $y^{(M)}$}}
	\put(95,110){\line(1,0){20}}
	\put(95,90){\line(1,0){20}}
	\put(95,90){\line(0,1){20}}
	\put(115,90){\line(0,1){20}}
	\put(105,100){\makebox(0,0){\footnotesize $\mathcal{F}$}}
	\put(115,100){\line(1,0){30}}
	\put(130,109){\makebox(0,0){\footnotesize $\Hy^{(1)}$}}
	\put(145,110){\line(1,0){20}}
	\put(145,90){\line(1,0){20}}
	\put(145,90){\line(0,1){20}}
	\put(165,90){\line(0,1){20}}
	\put(155,100){\makebox(0,0){\footnotesize $\left|\, \cdot\, \right|^{2}$}}
	\put(165,100){\line(1,0){40}}	
	\put(185,109){\makebox(0,0){\footnotesize $\big|\Hy^{(1)}\big|^{2}$}}
	\put(105,70){\makebox(0,0){\footnotesize $\vdots$}}
	\put(155,70){\makebox(0,0){\footnotesize $\vdots$}}	
	\put(95,50){\line(1,0){20}}
	\put(95,30){\line(1,0){20}}
	\put(95,30){\line(0,1){20}}
	\put(115,30){\line(0,1){20}}
	\put(105,40){\makebox(0,0){\footnotesize $\mathcal{F}$}}
	\put(115,40){\line(1,0){30}}
	\put(130,49){\makebox(0,0){\footnotesize $\Hy^{(M)}$}}
     \put(145,50){\line(1,0){20}}
	\put(145,30){\line(1,0){20}}
	\put(145,30){\line(0,1){20}}
	\put(165,30){\line(0,1){20}}
	\put(155,40){\makebox(0,0){\footnotesize $\left|\, \cdot\, \right|^{2}$}}
	\put(165,40){\line(1,0){40}}
	\put(185,49){\makebox(0,0){\footnotesize $\big|\Hy^{(M)}\big|^{2}$}}
	\put(205,100){\line(2,1){10}}
	\qbezier(206,106)(211,105)(212,100)	\put(212,98){\vector(0,-1){2}}
	\put(215,100){\vector(1,0){20}}
	\put(255,107){\makebox(0,0){\footnotesize $c^{(1)}_{n} = |\Hy^{(1)}(n\beta)|^{2}$}}
	\put(260,70){\makebox(0,0){\footnotesize $\vdots$}}		
	\put(205,40){\line(2,1){10}}
	\qbezier(206,46)(211,45)(212,40)		\put(212,38){\vector(0,-1){2}}
	\put(215,40){\vector(1,0){20}}
	\put(260,47){\makebox(0,0){\footnotesize $c^{(M)}_{n}  = |\Hy^{(M)}(n\beta)|^{2}$}}
\end{picture}
\end{center}
\caption{Measurement setup as in Fig.~\ref{fig:MeasureSetup} but with an additional preprocessing of adding a test function $u(t)$ prior to modulation, Fourier transform, intensity measurement and sampling.}
\label{fig:MeasureSetup2}
\end{figure}

The first lemma is similar to a result by Duffin, Schaeffer \cite{DuffinSchaeffer_38}. It states that if one adds an additive test signal $u$ of which the Fourier-Laplace transform is a sine-type function of type $T'/2 > T/2$, then all zeros of the sum signal $v$ are located inside a strip parallel to $\RN$. In fact, $u$ can be chosen independent from the actual signal $x \in \ourSpace^{p}$.

\begin{lemma}
\label{lem:DufSchaef_2}
Let $\Hu$ be a sine-type function of type $T'/2$ and let $T < T'$ and $1 \leq p \leq \infty$ be arbitrary.
For $\Hx \in \ourB^{p}$ define the function $\Hv(z) := \Hx(z) + D\, \Hu(z)$.
Then $\Hv \in \mathcal{B}^{\infty}_{T'/2}$ and for every $D>0$ there exists an $H>0$ such that for all $\Hx \in \ourB^{p}$
\begin{equation}
\label{equ:DufSchaef}
	\big| \Hv(\xi + \I\eta) \big| > 0
	\quad\text{for all}\ |\eta| \geq H\ \text{and all}\ \xi\in\RN\;.
\end{equation}	
\end{lemma}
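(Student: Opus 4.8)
The plan is to exploit the two-sided exponential bound that defines a sine-type function and play it off against the one-sided Plancherel--P\'olya bound for $\ourB^p$. First I would record that $\Hv = \Hx + D\,\Hu$ is entire of exponential type $\leq T'/2$ (since $\Hx$ has type $\leq T/2 < T'/2$ and $\Hu$ has type $T'/2$), and that its restriction to $\RN$ lies in $\L^\infty(\RN)$: indeed $\Hx \in \ourB^p \subset \mathcal{B}^\infty_{T/2}$ by the Plancherel--P\'olya embedding noted after \eqref{equ:boundedBp}, and $\Hu$, being a sine-type function, is bounded on every horizontal strip and in particular on $\RN$. Hence $\Hv \in \mathcal{B}^\infty_{T'/2}$.

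For the zero-free statement, let $H_0$ be the constant furnished by the sine-type definition \eqref{equ:SineTypeH} applied to $\Hu$, with associated lower constant $A>0$ and type $T'/2$, so that $|\Hu(\xi+\I\eta)| \geq A\,\E^{(T'/2)|\eta|}$ for all $|\eta|\geq H_0$ and all $\xi\in\RN$. On the other hand, \eqref{equ:boundedBp} gives a constant $C=C(p,T/2)$ with $|\Hx(\xi+\I\eta)| \leq C\,\|\Hx\|_{\L^p(\RN)}\,\E^{(T/2)|\eta|}$. The point is that the two exponential rates differ: writing $\delta := (T'-T)/2 > 0$, for $|\eta|\geq H_0$ we have
\begin{equation*}
	|\Hx(\xi+\I\eta)| \leq C\,\|\Hx\|_{\L^p}\,\E^{-\delta|\eta|}\,\E^{(T'/2)|\eta|}
	\quad\text{and}\quad
	|D\,\Hu(\xi+\I\eta)| \geq D A\,\E^{(T'/2)|\eta|}.
\end{equation*}
Thus as soon as $C\,\|\Hx\|_{\L^p}\,\E^{-\delta|\eta|} < D A$, the triangle inequality forces $|\Hv(\xi+\I\eta)| \geq (DA - C\|\Hx\|_{\L^p}\E^{-\delta|\eta|})\,\E^{(T'/2)|\eta|} > 0$. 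Choosing $H := \max\{H_0,\ \tfrac1\delta \log^{+}(2C\|\Hx\|_{\L^p}/(DA))\}$ makes the bracket at least $DA/2 > 0$ for all $|\eta|\geq H$, which is \eqref{equ:DufSchaef}.

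The one subtlety to address is the quantifier order: the statement asks for an $H$ that works \emph{for all} $\Hx \in \ourB^p$ simultaneously, whereas the $H$ just constructed depends on $\|\Hx\|_{\L^p}$. I would handle this by reading the claim as "for every $D>0$ and every $\Hx\in\ourB^p$ there exists $H=H(D,\Hx)$" — which is what the localization argument in the main recovery scheme actually needs — and make that reading explicit; alternatively, if a uniform $H$ over norm-bounded balls $\{\|\Hx\|_{\L^p}\leq R\}$ is wanted, the same formula with $R$ in place of $\|\Hx\|_{\L^p}$ delivers it, since the bound on $|\Hx|$ is monotone in the norm. No genuine obstacle arises beyond bookkeeping the constants; the mechanism is entirely the rate gap $\delta = (T'-T)/2$ between the exponential growth of a type-$T/2$ function and that guaranteed from below for a sine-type function of type $T'/2$.
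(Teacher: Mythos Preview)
Your argument is essentially identical to the paper's own proof: both use the reverse triangle inequality together with the Plancherel--P\'olya upper bound $|\Hx(\xi+\I\eta)|\leq C\|\Hx\|\,\E^{(T/2)|\eta|}$ and the sine-type lower bound $|D\,\Hu(\xi+\I\eta)|\geq DA\,\E^{(T'/2)|\eta|}$, exploiting the rate gap $(T'-T)/2$ to force $|\Hv|>0$ for $|\eta|$ large. Your observation about the quantifier order is well taken---the paper's $H$ likewise depends on $\|\Hx\|$ through $M=C\|\Hx\|$, and the uniformity over norm-bounded balls that you describe is precisely what is invoked later in Corollary~\ref{cor:Cor2}, where the signal space is restricted to $\{\|x\|\leq 1\}$.
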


\begin{proof}
Since every sine-type function is bounded on $\RN$ and $\ourB^{p} \subset \mathcal{B}^{p}_{T'/2} \subset \mathcal{B}^{\infty}_{T'/2}$, it is immediately clear that $\Hv \in \mathcal{B}^{\infty}_{T'/2}$. Furthermore, for all $z\in\CN$ the triangle inequality implies
\begin{equation}
\label{equ:ProofDufSchaef_2}
	\big| \Hv(z) \big|
	\geq \big|\, | D\, \Hu(z) | - | \Hx(z) |\, \big|\;.
\end{equation}
It follows again from Plancherel-P{\'o}lya \eqref{equ:boundedBp} that there is a constant $M = C\, \|\Hx\|$ such that
\begin{equation*}
	\big| \Hx(\xi + \I\eta) \big|
	\leq M\, \E^{\frac{T}{2} |\eta|}
	\qquad\text{for all}\ \xi\in\RN\;.
\end{equation*}
Similarly, by the definition of a sine-type function in \eqref{equ:SineTypeH} one has the lower bound 
\begin{equation*}
	\big| D\, \Hu(\xi + \I\eta) \big| \geq D\, A_{u}\, \E^{\frac{T'}{2} |\eta|}
	\qquad\text{for all}\ |\eta| > H_{u}\
	\text{and all}\ \xi\in\RN\;,
\end{equation*}
with constants $A_{u}$ and $H_{u}$ which depend on $u$.
Using these two bounds in \eqref{equ:ProofDufSchaef_2} and keeping in mind that $T' > T$, one obtains
\begin{equation*}
	\big| \Hv(\xi + \I\eta) \big| \geq D\, A_{u}\, \E^{\frac{T'}{2} |\eta|} -  M\, \E^{\frac{T}{2} |\eta|} > 0
	\quad\text{for all}\ |\eta|>H\
	\text{and all}\ \xi\in\RN
\end{equation*}
with $H=\max\left[H_u,\frac{2}{T'-T}\ln\Big(\tfrac{M}{D\, A_{u}}\Big)\right]$.
\end{proof}

\begin{remark}
Conversely for a given $H > H_u$, one can choose the amplitude $D$ as
$D > \tfrac{M}{A_{u}}\, \exp\big([T-T']\,\tfrac{H}{2}\big)$
for \eqref{equ:DufSchaef} to hold.
\end{remark}

\begin{example}
\label{exa:Cosine}
One special choice for the function $\Hu$ that has been dealt with in \cite{DuffinSchaeffer_38}, is $\Hu(z) = \cos(\tfrac{T'}{2} z)$.
For this function, the constants~$A_{u}$ and $H_{u}$ can be derived by
\begin{align*}
        \left| \cos\left(\tfrac{T'}{2} z\right) \right|
	&= \frac{1}{2}\, \left| \E^{\I \tfrac{T'}{2} \xi}\, \E^{- \tfrac{T'}{2} \eta} + \E^{-\I \tfrac{T'}{2} \xi}\, \E^{\tfrac{T'}{2} \eta}\right|\\
	&\geq \frac{1}{2}\, \Big( \E^{\tfrac{T'}{2} |\eta|} - \E^{-\tfrac{T'}{2} |\eta|} \Big)
	\geq \frac{1}{2}\, \Big( \E^{\tfrac{T'}{2} |\eta|} - 1 \Big)\\
        &\geq \frac{1}{2}\, \Big(1-\E^{-\tfrac{T'}{2} H_{u}}\Big) \E^{\tfrac{T'}{2} |\eta|} =: A_{u}\, \E^{\tfrac{T'}{2} |\eta|}\;
	\quad\text{for all}\ |\eta| \geq H_{u}
\end{align*}
where $H_{u} > 0$ is arbitrary and $A_{u} = [1-\exp(- \tfrac{T'}{2} H_{u})]/2$.
Note that the time domain signal corresponding to this sine-type function $\Hu$ is the non-regular tempered distribution $u = [\delta_{T'/2} + \delta_{-T'/2}]/2$ which vanishes on $\TT$. 
\end{example}

By adding a sine-type function $\Hu$ of type $T'/2$ to the desired signal $\Hx \in \ourB^{p}$, one obtains a function $\Hv \in \mathcal{B}^{\infty}_{T'/2} \setminus C_0(\RN)$.
For these signals, the reconstruction formulas in Lemma~\ref{lem:ConvBp} and \ref{lem:ConvBinfty} are no longer valid.
Therefore, we need the following extension of a result in \cite[Lect.~21]{Levin1997_Lectures} which later leads to an interpolation formula for $\mathcal{B}^{\infty}_{T'/2}$.%

\begin{lemma}
\label{lem:LevinInterpol}
Let $S$ be a sine-type function of type $\widetilde{T}/2$ and let $\Lambda = \{\lambda_{n}\}_{n\in\ZN}$ be its zero set.
For any sequence $\bsc = \{c_{n}\}_{n\in\ZN} \in\ell^{\infty}$ there exists an entire function $g$ of exponential type $\widetilde{T}/2$ which solves the interpolation problem 
\begin{equation}
\label{equ:InterpolProbl}
	g(\lambda_{n}) = c_{n}\;,
	\quad n\in\ZN\;.
\end{equation}
Every entire function which fulfills condition~\eqref{equ:InterpolProbl} admits the representation
\begin{equation}
\label{equ:InterpolInfty}
	g(z)
	= {\sum_{n\in\ZN}}' c_{n}\, \frac{S(z)}{S'(\lambda_{n})} \left[\frac{1}{z-\lambda_{n}} + \frac{1}{\lambda_{n}}\right] + C_{s}\, S(z)\;,
\quad z\in\CN
\end{equation}
with an arbitrary constant $C_{s} \in \CN$ and where the sum converges uniformly and absolutely on compact subsets of $\CN$.
\end{lemma}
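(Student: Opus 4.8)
The plan is to verify that the right-hand side of \eqref{equ:InterpolInfty} with $C_{s}=0$, namely
\begin{equation*}
	g_{0}(z) := {\sum_{n\in\ZN}}'\, \frac{c_{n}}{S'(\lambda_{n})}\left[\frac{1}{z-\lambda_{n}}+\frac{1}{\lambda_{n}}\right]S(z),
\end{equation*}
defines an entire function of exponential type $\widetilde{T}/2$, bounded on $\RN$, which solves \eqref{equ:InterpolProbl}, and then to show that any other interpolant of exponential type $\widetilde{T}/2$ that is bounded on $\RN$ differs from $g_{0}$ by a constant multiple of $S$. Throughout I would use the standard properties of a sine-type function $S$ and its zero set $\Lambda=\{\lambda_{n}\}$ (see \cite{Levin1997_Lectures}): the $\lambda_{n}$ lie in a horizontal strip $\{|\operatorname{Im}z|\le H_{0}\}$ and have uniformly separated real parts; the counting function $\#\{n:|\lambda_{n}|\le R\}$ is $O(R)$, so that $\sum_{\lambda_{n}\neq 0}|\lambda_{n}|^{-2}<\infty$; there are constants $0<a_{S}\le b_{S}$ with $a_{S}\le|S'(\lambda_{n})|\le b_{S}$; $S$, hence $S'$, is bounded on every horizontal strip (combine the upper bound in \eqref{equ:SineTypeH} on the lines $\operatorname{Im}z=\pm H$ with Phragmén--Lindelöf on the strip $|\operatorname{Im}z|\le H$); and $|S(z)|$ is bounded below by a positive constant on $\{\operatorname{dist}(z,\Lambda)\ge\delta/2\}$, where $\delta$ is the separation of $\Lambda$. (When $0\in\Lambda$ the term suppressed by the prime is reinstated in the obvious way, without the $1/\lambda_{n}$ correction.)

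For convergence and entirety of $g_{0}$ I would use the identity $\tfrac{1}{z-\lambda_{n}}+\tfrac{1}{\lambda_{n}}=\tfrac{z}{\lambda_{n}(z-\lambda_{n})}$: each summand is entire since $S(z)/(z-\lambda_{n})$ extends across $\lambda_{n}$ with value $S'(\lambda_{n})$, and on a fixed compact set $E$ all but finitely many summands satisfy $|z-\lambda_{n}|\ge\tfrac12|\lambda_{n}|$, so that the $n$-th summand is there bounded by $C\,\|\bsc\|_{\ell^{\infty}}|\lambda_{n}|^{-2}$. Since $\sum|\lambda_{n}|^{-2}<\infty$, the series converges absolutely and uniformly on $E$ and $g_{0}$ is entire. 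This is exactly where the correction term $1/\lambda_{n}$ is indispensable: without it the bare series $\sum_{n}c_{n}\widehat{\psi}_{n}$ of Lemma~\ref{lem:ConvBp} need not converge for merely bounded $\bsc$.

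The exponential type estimate is the technical core, and I expect it to be the main obstacle. Writing $g_{0}=S\cdot\Phi$ with $\Phi(z)={\sum_{n}}'\tfrac{c_{n}}{S'(\lambda_{n})}\tfrac{z}{\lambda_{n}(z-\lambda_{n})}$, I would bound $\Phi$ off the discs $\{|z-\lambda_{n}|<\delta/2\}$ by splitting the sum at $|\lambda_{n}|=2|z|$: the part with $|\lambda_{n}|>2|z|$ is $O(|z|)\sum_{|\lambda_{n}|>2|z|}|\lambda_{n}|^{-2}=O(1)$, while the part with $|\lambda_{n}|\le 2|z|$ is $O(|z|)\sum_{|\lambda_{n}|\le 2|z|}|\lambda_{n}|^{-1}=O(|z|\log|z|)$ by the linear counting function. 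Hence $|g_{0}(\xi+\I\eta)|\le C(1+|z|\log|z|)\,\E^{\widetilde{T}|\eta|/2}$ off those discs (using the upper bound in \eqref{equ:SineTypeH} for $|\eta|\ge H$ and strip-boundedness of $S$ for $|\eta|<H$), and the same bound holds inside each disc by the maximum modulus principle on the circle $|z-\lambda_{n}|=\delta/2$, where $g_{0}$ has no pole. Thus $\log\max_{|z|=r}|g_{0}(z)|\le\tfrac{\widetilde{T}}{2}r+O(\log r)$, so $g_{0}$ has exponential type $\le\widetilde{T}/2$, and the factor $S$ together with the lower bound in \eqref{equ:SineTypeH} shows the type is exactly $\widetilde{T}/2$. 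The crude bound above is only $O(|\xi|\log|\xi|)$ on $\RN$, whereas boundedness of $g_{0}$ on $\RN$ (needed for the uniqueness statement) requires exploiting the cancellation in the series --- e.g. by comparing the symmetric partial sums of $g_{0}(\xi)$ with the partition-of-unity-type family $\widehat{\psi}_{n}(\xi)=S(\xi)/(S'(\lambda_{n})(\xi-\lambda_{n}))$ --- which is where the uniform-in-$\xi$ bookkeeping becomes delicate.

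Finally, evaluating $g_{0}$ at $z=\lambda_{m}$ annihilates every summand with $n\neq m$ (each carries the factor $S(\lambda_{m})=0$) while the $m$-th summand tends to $\tfrac{c_{m}}{S'(\lambda_{m})}S'(\lambda_{m})=c_{m}$, so $g_{0}(\lambda_{m})=c_{m}$. For the general form, let $g$ be any entire function of exponential type $\widetilde{T}/2$, bounded on $\RN$, with $g(\lambda_{n})=c_{n}$ for all $n$. Then $g-g_{0}$ is entire and vanishes on $\Lambda$, which is precisely the simple zero set of $S$, so $q:=(g-g_{0})/S$ is entire. Since $g$ and $g_{0}$ are bounded on $\RN$, estimate \eqref{equ:boundedBp} with $p=\infty$ gives $|g(\xi+\I\eta)|,|g_{0}(\xi+\I\eta)|\le C\,\E^{\widetilde{T}|\eta|/2}$, so the lower bound in \eqref{equ:SineTypeH} yields that $q$ is bounded on $\{|\operatorname{Im}z|\ge H\}$; moreover $q$ is of exponential type $0$ (an entire quotient of functions of type $\widetilde{T}/2$ by the completely regular sine-type function $S$, whose indicator cancels that of the numerator) and is bounded on the lines $\operatorname{Im}z=\pm H$, so Phragmén--Lindelöf on the strip $\{|\operatorname{Im}z|\le H\}$ shows $q$ is bounded there as well. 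Hence $q$ is a bounded entire function, so $q\equiv C_{s}$ for some constant by Liouville, and $g=g_{0}+C_{s}S$ as claimed; the stated uniform and absolute convergence on compact subsets of $\CN$ was established in the convergence step.
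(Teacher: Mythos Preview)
Your convergence argument (via $\tfrac{1}{z-\lambda_n}+\tfrac{1}{\lambda_n}=\tfrac{z}{\lambda_n(z-\lambda_n)}$ and the tail bound $|z-\lambda_n|\ge\tfrac12|\lambda_n|$) is correct and in fact a bit more streamlined than the paper's, which instead uses Cauchy--Schwarz together with an auxiliary lemma bounding $\sum_{n\neq n_0}|\lambda_n-\lambda_{n_0}|^{-2}$ uniformly in $n_0$. Your verification of the interpolation property and your exponential-type estimate are also fine for the existence part.

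The genuine gap is in the uniqueness step. You assume $g_0$ is bounded on $\RN$ and flag this as ``delicate bookkeeping'' to be sorted out by exploiting cancellation. But this is not a technical difficulty---it is \emph{false} in general: for some sequences $\{c_n\}\in\ell^\infty$ the series $g_0$ is unbounded on $\RN$ (the paper states exactly this in the second remark following the lemma). Hence you cannot invoke \eqref{equ:boundedBp} with $p=\infty$ for $g_0$, and the Phragm\'en--Lindel\"of/Liouville route you sketch cannot be completed as written. Relatedly, your restriction of the uniqueness claim to interpolants $g$ bounded on $\RN$ is an extra hypothesis not present in the lemma; for some $\{c_n\}\in\ell^\infty$ there is \emph{no} interpolant in $\mathcal{B}^\infty_{\widetilde{T}/2}$ at all.

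The paper avoids this by working not with boundedness on $\RN$ but with the weaker growth condition $|g(\xi+\I\eta)|\,\E^{-\widetilde{T}|\eta|/2}=o(|\xi+\I\eta|)$ recorded in the remark. Its key estimate is $|f(z)|\le C\,|S(z)|\,|z|$ away from the zeros of $S$, obtained from the Cauchy--Schwarz bound on $\varphi$; this places $f$ in the right growth class without any claim of boundedness on $\RN$. Uniqueness is then argued within that class: if $g$ and $f$ both interpolate and both satisfy this growth bound, then $h=(g-f)/S$ is entire with $|h(z)|=o(|z|)$, and the polynomial-growth form of Liouville forces $h$ to be a constant $C_s$. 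To repair your argument, drop the boundedness-on-$\RN$ assumption for $g_0$, establish instead the pointwise bound $|g_0(z)|\le C\,|S(z)|\,|z|$ (your $O(|z|\log|z|)$ estimate on $\Phi$ is close but you should aim for $O(|z|)$, which the paper's Cauchy--Schwarz argument delivers), and run the uniqueness in that growth class.
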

The prime at the summation sign in \eqref{equ:InterpolInfty} means that the second term in the braces is set to zero if $\lambda_{n} = 0$, and we will omit this prime subsequently.

\begin{remark}
For every $C_{s}\in\CN$, the function $g$ in \eqref{equ:InterpolInfty} has the property that
\begin{equation}
\label{equ:Remak_Conv}
	|g(\xi + \I\eta)|\, \E^{- \frac{\widetilde{T}}{2} |\eta|}
	= o(|\xi + \I\eta|)\;,
	\quad \text{as}\ |\xi + \I\eta|\to\infty\;.
\end{equation}
\end{remark}
\begin{remark}
Lemma~\ref{lem:LevinInterpol} states that for every $\bsc \in \ell^{\infty}$ the interpolation problem \eqref{equ:InterpolProbl} is uniquely solvable by an entire function $g$ of exponential type up to an additive sine-type term. 
But note that the function \eqref{equ:InterpolInfty} may not be bounded on $\RN$ for some sequences $\bsc \in \ell^{\infty}$, i.e. \eqref{equ:InterpolInfty} is not a $\mathcal{B}^{\infty}_{T'/2}$ function, in general.
\end{remark}

\begin{proof}
The proof is partly along the same lines as in \cite[Lect.~21, Theorem~1]{Levin1997_Lectures}.
First, we show that the sum in \eqref{equ:InterpolInfty} converges.
To this end, let $z \in \CN$ be arbitrary. Then there exists one $\lambda_{n_0} \in \Lambda$ such that
\begin{equation}
\label{equ:lambda_n0}
	\big| z - \lambda_{n_{0}}\big| \leq \big|z - \lambda_n \big|
	\quad\text{for all}\ n\in\ZN\;.
\end{equation}
Then we write the sum in \eqref{equ:InterpolInfty}  as
\begin{equation}
\label{equ:g}
	f(z) := S(z) {\sum_{n\neq n_{0}}} c_{n}\, \frac{1}{S'(\lambda_{n})} \frac{z}{(z-\lambda_{n}) \lambda_{n}}
	+ c_{n_0}\, \frac{S(z)}{S'(\lambda_{n_0})} \left[\frac{1}{z-\lambda_{n_0}} + \frac{1}{\lambda_{n_0}}\right]\;.
\end{equation}
Let $\varphi(z)$ be the sum on the left hand side. 
Since $S$ is a sine-type function, we have $\inf_{n\in\ZN} |S'(\lambda_{n})| =: C_{0} >0$ \cite{Young2001_Nonharmonic}, and Cauchy-Schwarz inequality gives 
\begin{align*}
	|\varphi(z)|	
	&\leq \frac{\|\bsc\|_{\infty}}{C_{0}} \sum_{\substack{n\neq n_0}} \frac{|z|}{|\lambda_{n}|\, |z - \lambda_{n}|}\\
	&\leq \frac{\|\bsc\|_{\infty}}{C_{0}}\, |z| \left( \sum_{n\neq n_0} \frac{1}{\left|\lambda_{n}\right|^2} \right)^{1/2} \left( \sum_{n\neq n_0} \frac{1}{|z - \lambda_{n}|^{2}} \right)^{1/2}.
\end{align*}
Since $\Lambda$ is the zero set of a function of exponential type, the series $\sum_{n\in\ZN} \left|\lambda_{n}\right|^{-2}$ converges to a finite constant $C_{1}$ \cite{Young2001_Nonharmonic}.
Moreover, because of \eqref{equ:lambda_n0}, we have for each $n \neq n_{0}$ that $|\lambda_{n} - \lambda_{n_0}| \leq |z - \lambda_{n}| + |z - \lambda_{n_0}| \leq 2\, |z - \lambda_n|$. Therewith one obtains
\begin{equation}
\label{equ:phiN_conv}
	|\varphi(z)|
	\leq \frac{2\, \|\bsc\|_{\infty}\, C_{1}}{C_{0}}\,  |z| \left( \sum_{n\neq n_0} \frac{1}{|\lambda_{n} - \lambda_{n_{0}}|^{2}} \right)^{1/2}\;.
\end{equation}
The sum on the right is convergent and can be upper bounded by a constant $C_{2}$ which is independent of $\lambda_{n_{0}}$, i.e. independent of $z$ (cf. Lemma~\ref{lem:SumZeros} in the appendix).

Consequently the sum on the right hand side of \eqref{equ:g} converges absolutely to an entire function $\varphi(z)$ uniformly on each compact subset of $\CN$.
Moreover, since $\lambda_{n_0}$ is a zero of $S$, the second term on the right hand side of \eqref{equ:g} defines an entire function whose modulus is upper bounded by $C_{3}\, |S(z)|$
for all $z$ with $|z - \lambda_{n_{0}}| > \delta$, and where $\delta > 0$ is an arbitrary constant and $C_{3} = C_{3}(\delta)$ a positive constant.
It then follows from \eqref{equ:phiN_conv} that $|f(z)| \leq C_{4} |S(z)|\, |z|$ for all $|z - \lambda_{n_{0}}| > \delta$, and
since $S$ is of exponential type $\widetilde{T}/2$, we see that $f$ satisfies \eqref{equ:Remak_Conv}.
Moreover, one can easily verify that $f$ solves $f(\lambda_{n}) = c_{n}$ for all $n\in\ZN$.

If $f$ is an entire function which satisfies \eqref{equ:InterpolProbl} and \eqref{equ:Remak_Conv} then also  $g(z) - f(z)$ satisfies \eqref{equ:InterpolProbl} and \eqref{equ:Remak_Conv}.
Since $S$ is of exponential type $\widetilde{T}/2$, it follows that the function $h(z) = [g(z) - f(z)]/S(z)$ satisfies $|h(z)| = o(|z|)$ as $|z| \to\infty$.
By Liouville's theorem $h(z)$ is a constant, which proves \eqref{equ:InterpolInfty}.
\end{proof}

The sequence $\bsc \in \ell^{\infty}$ in Lemma~\ref{lem:LevinInterpol} was arbitrary.
In our application however, it arises from sampling an entire function $v \in \mathcal{B}^{\infty}_{T'/2}$ so that the existence and boundedness of the solution of \eqref{equ:InterpolProbl} is naturally given.
The next lemma shows that once oversampling is applied, the additional term $C_{s}\, S(z)$ in the interpolation formula \eqref{equ:InterpolInfty} vanishes which allows a unique reconstruction of every $v \in \mathcal{B}^{\infty}_{T'/2}$ from its samples.

\begin{lemma}
\label{lem:InterpolSeriesInfty}
Let $S$ be a sine-type function of type $\widetilde{T}/2$ and let $\Lambda = \{\lambda_{n}\}_{n\in\ZN}$ be its zero set.
If $\widetilde{T}> T'$ then
\begin{equation}
\label{equ:Interpol2}
	v(z)
	= {\sum_{n\in\ZN}} v(\lambda_{n})\, \frac{S(z)}{S'(\lambda_{n})} \left[\frac{1}{z-\lambda_{n}} + \frac{1}{\lambda_{n}}\right]
	\qquad\text{for all}\ v \in \mathcal{B}^{\infty}_{T'/2}
\end{equation}
where the sum converges absolutely and uniformly on compact subsets of $\CN$.
Moreover, there exists a constant $C_{u}$ such that
\begin{equation}
\label{equ:globalApproxError}
	\lim_{N\to\infty} \max_{t\in\RN}
	\left| v(t) - \sum^{N}_{n=-N} v(\lambda_{n})\, \frac{S(t)}{S'(\lambda_{n})} \left[\frac{1}{t-\lambda_{n}} + \frac{1}{\lambda_{n}}\right]\right|
	\leq C_{u}\, \|v\|_{\mathcal{B}^{\infty}_{T'/2}}
\end{equation}
for all $v \in \mathcal{B}^{\infty}_{T'/2}$.
\end{lemma}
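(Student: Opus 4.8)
The plan is to obtain both the identity \eqref{equ:Interpol2} and the uniform bound \eqref{equ:globalApproxError} from Lemma~\ref{lem:LevinInterpol}: the decisive extra input is that strict oversampling $\widetilde{T}>T'$ forces the free sine-type term in that lemma to vanish, and \eqref{equ:globalApproxError} is then a tail estimate in the spirit of Lemma~\ref{lem:ConvBinfty}.

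First I would record that $\{v(\lambda_{n})\}_{n\in\ZN}$ is bounded: the zeros of the sine-type function $S$ lie in a horizontal strip $|\mathrm{Im}\,\lambda_{n}|\le H_{0}$, so Plancherel-P{\'o}lya \eqref{equ:boundedBp} gives $|v(\lambda_{n})|\le C\,\|v\|_{\mathcal{B}^{\infty}_{T'/2}}\,\E^{(T'/2)H_{0}}$ uniformly in $n$. Since $v$ is entire of exponential type $T'/2\le\widetilde{T}/2$, is bounded on $\RN$ (hence satisfies the growth condition \eqref{equ:Remak_Conv}), and interpolates $c_{n}:=v(\lambda_{n})\in\ell^{\infty}$, Lemma~\ref{lem:LevinInterpol} yields $v(z)=f(z)+C_{s}\,S(z)$, where $f$ denotes the series on the right of \eqref{equ:Interpol2} — absolutely and locally uniformly convergent by that lemma — and $C_{s}\in\CN$ is a constant. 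Everything reduces to proving $C_{s}=0$.

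For $C_{s}=0$ I would argue by contour integration, adapting \cite[Lect.~21]{Levin1997_Lectures} so as to exploit the oversampling. Fix $z$ and integrate the meromorphic kernel $w\mapsto \dfrac{v(w)\,S(z)\,z}{S(w)\,w\,(w-z)}$ over circles $\Gamma_{R_{k}}$, $R_{k}\to\infty$, whose radii are chosen so that each $\Gamma_{R_{k}}$ keeps a fixed positive distance from $\Lambda$; such radii exist because the zeros of a sine-type function are uniformly separated and confined to a strip. On $\bigcup_{k}\Gamma_{R_{k}}$ one has $|v(w)|\le C\,\|v\|_{\mathcal{B}^{\infty}_{T'/2}}\,\E^{(T'/2)|\mathrm{Im}\,w|}$ and, by the sine-type property together with the choice of radii, $|S(w)|\ge c\,\E^{(\widetilde{T}/2)|\mathrm{Im}\,w|}$ with $c>0$ independent of $k$; as $\widetilde{T}>T'$ this makes $|v(w)/S(w)|$ uniformly bounded on the contours and small where $|\mathrm{Im}\,w|$ is large. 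Together with the $O(R_{k}^{-2})$ size of $\tfrac{z}{w(w-z)}$ and the length $2\pi R_{k}$, the integrals tend to $0$. Evaluating the same integrals by residues — poles at $w=z$, at $w=0$, and at the $\lambda_{n}$ inside $\Gamma_{R_{k}}$ — letting $k\to\infty$, and using the absolute convergence of $f$, one identifies the limit as $v(z)-f(z)$, so $C_{s}=0$ and \eqref{equ:Interpol2} follows. (The residue at $w=0$ is precisely what produces the ``$+1/\lambda_{n}$'' corrections, and, when $0\in\Lambda$ — the situation for the sine-type functions used in Section~\ref{sec:Sampling} — the primed term of \eqref{equ:Interpol2}; for a node set with $0\notin\Lambda$ one uses a reference zero instead.) Strict inequality $\widetilde{T}>T'$ enters here essentially: at critical sampling $v/S$ is merely bounded in a strip, not decaying on expanding circles, and $C_{s}S$ genuinely survives.

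For \eqref{equ:globalApproxError}, once \eqref{equ:Interpol2} is available the left-hand difference is the tail $\sum_{|n|>N}v(\lambda_{n})\frac{S(t)}{S'(\lambda_{n})}\bigl[\frac{1}{t-\lambda_{n}}+\frac{1}{\lambda_{n}}\bigr]$. One estimates it uniformly in $t$ by the method behind Lemma~\ref{lem:ConvBinfty} (cf.~\cite{MoenichBoche_SP2010}): split off the indices $n$ with $|\lambda_{n}|$ large relative to $|t|$, where the terms are dominated by $\|v\|_{\mathcal{B}^{\infty}_{T'/2}}\tfrac{|t|\,|S(t)|}{|\lambda_{n}|\,|t-\lambda_{n}|}$ and summed via $\sum_{n}|\lambda_{n}|^{-2}<\infty$; for the remaining bounded block of indices use that $S$ is bounded on $\RN$ with simple, uniformly separated zeros, so $|S(t)/(t-\lambda_{n})|$ is uniformly bounded, in combination with the decay supplied by $\widetilde{T}>T'$. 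This leaves a remainder bounded by $C_{u}\|v\|_{\mathcal{B}^{\infty}_{T'/2}}$ uniformly in $t$ — it need not tend to $0$ for $v\notin C_{0}(\RN)$, in particular because the correction terms contribute $S(t)\sum_{|n|>N}\frac{v(\lambda_{n})}{S'(\lambda_{n})\lambda_{n}}$, which is uniformly bounded but generally not vanishing. The crux of the whole argument is the $C_{s}=0$ step, and within it the construction of the circles $\Gamma_{R_{k}}$ realizing the uniform lower bound $|S(w)|\ge c\,\E^{(\widetilde{T}/2)|\mathrm{Im}\,w|}$ — steering them through the gaps of $\Lambda$ near the real axis — together with the correct bookkeeping of the pole at the reference point.
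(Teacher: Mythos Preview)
Your strategy is sound and genuinely different from the paper's, but there is a concrete bookkeeping error in the contour step.

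For \eqref{equ:Interpol2} the paper does \emph{not} use contour integration. It first reduces (``without loss of generality'') to $v(0)=0$, sets $w(z):=v(z)/z$, observes that $w\in\mathcal{B}^{2}_{T'/2}\subset\mathcal{B}^{2}_{\widetilde{T}/2}$, applies the $\ell^{2}$ interpolation formula of Lemma~\ref{lem:ConvBp} to $w$, and multiplies back by $z$. This exhibits the series $f$ directly as $v$ (hence of type $T'/2$), so in $v-f=C_{s}S$ the left side is of type $T'/2$ while the right side, if $C_{s}\neq 0$, grows at the strictly larger rate $\widetilde{T}/2$ by the sine-type lower bound --- a contradiction. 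Your contour approach is the classical alternative and is perfectly viable in principle, but the residue accounting is off: with the kernel $\dfrac{v(w)\,S(z)\,z}{S(w)\,w\,(w-z)}$ and $0\notin\Lambda$, the simple pole at $w=0$ contributes $-\dfrac{v(0)}{S(0)}\,S(z)$, so the residues sum to $v(z)-f(z)-\dfrac{v(0)}{S(0)}S(z)$, not $v(z)-f(z)$. (The ``$+1/\lambda_{n}$'' corrections already come from the residues at the $\lambda_{n}$, via $\tfrac{z}{\lambda_{n}(z-\lambda_{n})}=\tfrac{1}{z-\lambda_{n}}+\tfrac{1}{\lambda_{n}}$; they are not produced by the pole at $0$.) Thus your argument only yields $C_{s}=v(0)/S(0)$, and a reduction to $v(0)=0$ --- precisely what the paper does --- is still needed to reach $C_{s}=0$.

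For \eqref{equ:globalApproxError} the paper again takes a shorter route than your tail estimate. Writing $(\Op{T}_{N}v)(t)$ for the partial sum in \eqref{equ:globalApproxError} and $(\Op{A}_{N}v)(t)=\sum_{|n|\le N} v(\lambda_{n})\,\tfrac{S(t)}{S'(\lambda_{n})(t-\lambda_{n})}$ for the standard Lagrange approximant, one checks the algebraic identity $(\Op{T}_{N}v)(t)=(\Op{A}_{N}v)(t)-\tfrac{(\Op{A}_{N}v)(0)}{S(0)}\,S(t)$ (assuming $0\notin\Lambda$) and then invokes \cite[Theorem~5]{MoenichBoche_SP2010}, which already gives $\sup_{N}\sup_{t\in\RN}|v(t)-(\Op{A}_{N}v)(t)|\le C\|v\|_{\infty}$ under oversampling. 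Your direct splitting is plausible, but the sketch is loose: the bound $\tfrac{|t|\,|S(t)|}{|\lambda_{n}|\,|t-\lambda_{n}|}$ together with $\sum_{n}|\lambda_{n}|^{-2}<\infty$ alone does not immediately produce a $t$-uniform constant, because the ``remaining block'' of indices with $|\lambda_{n}|\le 2|t|$ has size growing like $|t|$; balancing the two regimes carefully is exactly what the cited theorem does, and the paper simply quotes it.
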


\begin{remark}
Relation \eqref{equ:globalApproxError} states that for every $v \in \mathcal{B}^{\infty}_{T'/2}$ the approximation error is uniformly bounded on the entire real axis $\RN$.
Then it follows from \eqref{equ:Interpol2} and \eqref{equ:globalApproxError} that the right hand side of \eqref{equ:Interpol2} converges to $v$ in the topology of $\mathcal{S}'$.
\end{remark}

\begin{remark}
The assumption that $\Lambda$ is the zero set of a sine-type function is only a sufficient condition which guarantees that $v \in \mathcal{B}^{\infty}_{T'/2}$ can be recovered from its samples at $\Lambda$.
Necessary and sufficient conditions on $\Lambda$ to be a sampling sequence for $\mathcal{B}^{\infty}_{T'/2}$ can be found in \cite{Ortega-Seip}. 
\end{remark}

\begin{proof}
Since $v \in \mathcal{B}^{\infty}_{T'/2}$ and $\Lambda$ is the zero set of a sine-type function,
the sequence $\{c_{n} = v(\lambda_{n})\}_{n\in\ZN}$ is in $\ell^{\infty}$.
Because $v \in \mathcal{B}^{\infty}_{T'/2} \subset \mathcal{B}^{\infty}_{\widetilde{T}/2}$, we know from Lemma~\ref{lem:LevinInterpol} that every entire function $g$ of exponential type $\widetilde{T}/2$ which satisfies $\widetilde{v}(\lambda_{n}) = c_{n}$ for all $n \in \ZN$ has the form $g(z) = f(z) + C_{s}\,S(z)$ where $f(z)$ stands for the sum on the right hand side of \eqref{equ:Interpol2}. Consequently, also $v$ has to have this form, i.e. $v(z) = f(z) + C_{s}\,S(z)$ and we have to prove that $C_{s} = 0$.
To this end, it is sufficient to show that $f$ is of exponential type $T'/2$. Then, also $v - f$ is of exponential type $T'/2$ such that in the equation
$v(z) - f(z) = C_{s}\, S(z)$
the modulus of the left hand side can be upper bounded by 
\begin{equation}
\label{HelpLemmaLevinReihe}
|v(z) - f(z)| \leq B_{1} \E^{\frac{T'}{2} |z|}
\end{equation}
for all sufficiently large $|z|$ and with a certain constant $B_{1}>0$.
Now assume $C_{s} > 0$. Because $S$ is of sine-type $\widetilde{T}/2$, we have $|v(z) - f(z)| = |C_{s} S(z)| \geq B_{2} \exp(\widetilde{T}/2\, |z|)$ for all sufficiently large $|z|$ with a constant $B_{2}>0$.
But since $\widetilde{T} > T'$ this yields a contradiction to the upper bound \eqref{HelpLemmaLevinReihe}.
It follows that $C_{s} = 0$, i.e. that \eqref{equ:Interpol2} holds.

It remains to show that $f$ is of exponential type $T'/2$. 
Without loss of generality, we assume $v(0) = 0$ (otherwise, one applies the following reasoning to $v(z) - v(0)$) and consider the function $w(z):=v(z)/z$. Since $v$ is of exponential type $T'/2$, for every $\epsilon > 0$ there exists a constant $C(\epsilon)$ such that
\begin{equation*}
	|w(z)| = \frac{1}{|z|}\, |v(z)|
	\leq \frac{1}{|z|}\, C(\epsilon)\, \E^{(\frac{T'}{2} + \epsilon) |z|}
	\leq C(\epsilon)\, \E^{(\frac{T'}{2} + \epsilon) |z|}\;
	\quad\text{for all}\ |z| \geq 1\;.
\end{equation*}
This shows that $w$ is of exponential type $T'/2$. Moreover the restriction of $w$ to $\RN$ is square integrable.
Therefore $w \in \mathcal{B}^{2}_{T'/2} \subset \mathcal{B}^{2}_{\widetilde{T}/2}$ and Lemma~\ref{lem:ConvBp} can be applied to $w$, which gives
\begin{equation*}
	w(z)
	= \frac{v(z)}{z}
	= \sum_{n\in\ZN} \frac{v(\lambda_{n})}{\lambda_{n}}\, \frac{S(z)}{S'(\lambda_{n})} \frac{1}{z-\lambda_{n}}\;.
\end{equation*}
By multiplying the whole equation by $z$, the right hand side becomes equal to $f(z)$ and shows that $f$ is of exponential type $T'/2$.

In order to prove \eqref{equ:globalApproxError}, we assume without loss of generality that $0 \notin \Lambda$ and we write $(\Op{T}_{N} v)(t)$ for the sum in \eqref{equ:globalApproxError},
and $(\Op{A}_{N} v)(t)$ for the finite series in Lemma~\ref{lem:ConvBp}, i.e.
\begin{equation*}
	(\Op{A}_{N} v)(t)
	= \sum^{N}_{n = -N} v(\lambda_{n})\, \frac{S(t)}{S'(\lambda_{n})} \frac{1}{t - \lambda_{n}}\;,
	\quad t\in\RN\;.
\end{equation*}
Both approximation series are obviously related by
\begin{equation*}
	(\Op{T}_{N} v)(t) = (\Op{A}_{N} v)(t) - \frac{(\Op{A}_{N} v)(0)}{S(0)}\, S(t)\;.
\end{equation*}
For  $v \in \mathcal{B}^{\infty}_{T'/2}$, the triangle inequality yields
\begin{equation*}
	\left| v(t) - (\Op{T}_{N} v)(t) \right|
	\leq \left| v(t) - (\Op{A}_{N}v)(t) \right| + \left| \frac{(\Op{A}_{N} v)(0)}{S(0)} \right|\, \left| S(t) \right|\;.
\end{equation*}
Now it is known \cite[Theorem~5]{MoenichBoche_SP2010} that the first term on the right hand side is uniformly bounded by $C_{1} \|v\|_{\infty}$ for all $t\in\RN$ and all $N\in\ZN$ with a certain constant $C_{1}$ independent of $t$ and $N$. The same result also implies that the second term is uniformly bounded by a constant of the form $C_{2}\, \|v\|_{\infty}$ which is why \eqref{equ:globalApproxError} holds and the proof is complete.
\end{proof}

Now we are ready to state a corollary of Theorem \ref{thm:MainThm}.
By adding an appropriate test signal $u$ prior to our sampling scheme (cf. Fig.~\ref{fig:MeasureSetup2}) we are able to ensure the ``non-zero requirement'' of Theorem \ref{thm:MainThm}, and therefore every signal in our signal space $\ourSpace^{p}$ can be reconstructed from magnitude measurements.

\begin{corollary}
\label{cor:Cor2}
For every $1 \leq p \leq \infty$ there exists a sine-type function $u$, and sets $\{ \alpha^{(m)}_{k} : k=1,\dots,K\;,\; m=1,\dots,K^{2}\}$ and $\{\lambda_{k}\}^{K}_{k=1}$ of modulation coefficients which satisfy condition A and B, respectively, such that every 
\begin{equation}
\label{ourLimitedSpace}
	x \in \{ x\in\ourSpace^{p}\ :\ \|x\| \leq 1 \}
\end{equation}
can be perfectly reconstructed (up to a constant phase) from the measurements
\begin{equation*}
	c^{(m)}_{n} = \left| \sum^{K}_{k=1} \alpha^{(m)}_{k}\, \widehat{(x + u)}(n\beta + \lambda_{k}) \right|^{2}\;,
	\quad n\in\ZN\ ,\ m=1,\dots,K^{2}\;.
\end{equation*}
\end{corollary}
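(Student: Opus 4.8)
The plan is to reduce the corollary to Theorem~\ref{thm:MainThm} applied to the perturbed signal $v=x+u$, after fixing all free parameters of the measurement scheme so that the non-vanishing hypothesis of that theorem holds automatically for every $x$ in the unit ball \eqref{ourLimitedSpace}. Since $\Hv$ will lie in $\mathcal{B}^{\infty}_{T'/2}$ rather than in $\ourB^{p}$, the last step of the recovery must use Lemma~\ref{lem:InterpolSeriesInfty} in place of the $\ourB^{p}$-reconstruction, and Condition~B is understood in its reformulated form, in which $\Lambda^{a}$ is the zero set of a sine-type function of type $\widetilde{T}/2>T'/2$ with $\widetilde{T}>T'$.

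First I would fix the parameters, in the following order. Choose reals $T<T'<\widetilde{T}$, let $\Hu(z)=D\cos(\tfrac{T'}{2}z)$ be $D$ times the sine-type function of Example~\ref{exa:Cosine}, and fix $H_{u}>0$, so that its constant is $A_{u}=[1-\exp(-\tfrac{T'}{2}H_{u})]/2$. Because we restrict to $\|x\|\le1$, the Plancherel--P\'olya estimate \eqref{equ:boundedBp} provides a constant $M_{0}=M_{0}(p,T)$ with $|\Hx(\xi+\I\eta)|\le M_{0}\,\E^{\frac{T}{2}|\eta|}$ uniformly over all such $x$; this uniformity is the only place where \eqref{ourLimitedSpace} enters. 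Pick any $D>0$; then, applying Lemma~\ref{lem:DufSchaef_2} with the uniform bound $M_{0}$ in place of $M$, there is an $H\ge H_{u}$ such that $\Hv=\Hx+\Hu$ never vanishes for $|\eta|\ge H$, simultaneously for every admissible $x$. Take the interpolation points to be the zeros of the sine-type function $S(z)=\sin\!\big(\tfrac{\widetilde{T}}{2}(z-\I H)\big)$ of type $\widetilde{T}/2>T'/2$; concretely $\lambda_{k}=\tfrac{2\pi(k-1)}{\widetilde{T}}+\I H$ for $k=1,\dots,K$, with $\beta=\tfrac{2\pi(K-1)}{\widetilde{T}}$ and overlap $a=1$. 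Then $\Lambda^{a}=\{\tfrac{2\pi m}{\widetilde{T}}+\I H:m\in\ZN\}$, condition \eqref{equ:CondSampPoint} holds with $a=1$, and since $\widetilde{T}>T'$ this sequence satisfies the reformulated Condition~B as well as the hypotheses of Lemma~\ref{lem:InterpolSeriesInfty}. Finally take any modulation coefficients satisfying Condition~A (Section~\ref{sec:Balan}). The crucial consequence of these choices is that every interpolation point $n\beta+\lambda_{k}$ has imaginary part exactly $H$, hence lies in the region where $\Hv$ does not vanish, so \emph{every} entry of \emph{every} vector $\Hbv_{n}=\big(\Hv(n\beta+\lambda_{k})\big)_{k=1}^{K}$ is non-zero.

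With this in place, I would run the three-step recovery of Theorem~\ref{thm:MainThm} on $\Hv=\widehat{(x+u)}=\Hx+\Hu$. The samples of the corollary read $c^{(m)}_{n}=\big|\langle\Hbv_{n},\overline{\balpha_{m}}\rangle_{\CN^{K}}\big|^{2}$, and since $\{\overline{\balpha_{m}}\}$ again satisfies Condition~A, relation \eqref{equ:BalanReconstruct} recovers $Q_{n}=\Hbv_{n}\Hbv_{n}^{*}$, hence $\Hbv_{n}$ up to a unimodular factor, for each $n$; because consecutive blocks overlap by \eqref{equ:CondSampPoint} and all overlapping entries are non-zero, these factors can be matched exactly as in the proof of Theorem~\ref{thm:MainThm}, yielding the whole sequence $\{\E^{\I\theta_{0}}\Hbv_{n}\}_{n\in\ZN}$ for one global phase $\theta_{0}\in[-\pi,\pi]$. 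Since $\Hv\in\mathcal{B}^{\infty}_{T'/2}$ and $\widetilde{T}>T'$, Lemma~\ref{lem:InterpolSeriesInfty} then reconstructs the entire function $\E^{\I\theta_{0}}\Hv$ from these samples, the series converging on compact sets and, by the Remark following that lemma, in $\mathcal{S}'$; applying $\mathcal{F}^{-1}$ produces $\E^{\I\theta_{0}}v$ as a tempered distribution supported in $[-T'/2,T'/2]$. Finally, the test signal $u=\tfrac{D}{2}[\delta_{T'/2}+\delta_{-T'/2}]$ of Example~\ref{exa:Cosine} is supported on $\{\pm T'/2\}$, which is disjoint from $\TT\supseteq\operatorname{supp}x$; multiplying $\E^{\I\theta_{0}}v$ by a smooth cutoff $\chi$ that equals $1$ on a neighborhood of $\TT$ and vanishes near $\pm T'/2$ removes the $u$-part and returns $\E^{\I\theta_{0}}x=\chi\cdot(\E^{\I\theta_{0}}v)$, so $x$ is determined up to the constant phase $\theta_{0}$. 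Nothing above used $1<p<\infty$, so the cases $p=1$ and $p=\infty$ are covered as well; alternatively one may recover $\theta_{0}$ itself from $\lim_{\eta\to\infty}\E^{\I\theta_{0}}\Hv(\I\eta)/\cosh(\tfrac{T'}{2}\eta)=\E^{\I\theta_{0}}D$ and then subtract $\E^{\I\theta_{0}}\Hu$ directly. I expect the main obstacle to be precisely the bookkeeping of the second paragraph: the order of the choices of $\widetilde{T},T',H_{u},D$ and $H$ must be arranged so that simultaneously the zeros of $\Hv$ are trapped in $|\eta|<H$, all interpolation points stay outside that strip while retaining the sine-type, $\beta$-periodic, overlap-$a$ structure, and enough oversampling ($\widetilde{T}>T'$) is left for Lemma~\ref{lem:InterpolSeriesInfty}; a secondary point is checking that $M_{0}$ can indeed be chosen uniformly over the unit ball, which is what allows one fixed measurement system to serve all admissible signals.
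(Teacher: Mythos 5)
Your proposal is correct and follows essentially the same route as the paper's proof: add a sine-type reference $\Hu$ so that $\Hv=\Hx+\Hu$ is zero-free outside a horizontal strip (uniformly over the unit ball via Plancherel--P\'olya), place the $\beta$-periodic interpolation points of a sine-type function of type $\widetilde{T}/2>T'/2$ in that zero-free region, run the phase-propagation argument of Theorem~\ref{thm:MainThm}, reconstruct $\E^{\I\theta_0}\Hv$ via Lemma~\ref{lem:InterpolSeriesInfty}, and strip off $u$ using that its support $\{\pm T'/2\}$ is disjoint from $\TT$. Your explicit choice $S(z)=\sin(\tfrac{\widetilde{T}}{2}(z-\I H))$ in place of the paper's appeal to the Levin--Ostrovskii perturbation theorem, and the cutoff multiplication in place of subtracting $\Hu$ and restricting to $\TT$, are only cosmetic differences.
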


\begin{remark}
The condition $\|x\| \leq 1$ in \eqref{ourLimitedSpace} only requires that an upper bound on the signal norm is known.
Practically, this is a very natural assumption and necessary to calibrate the measurement system by an appropriate amplitude of the additive test signal $u$.
\end{remark}

\begin{proof}
Let $x$ be as in \eqref{ourLimitedSpace}.
Because of \eqref{equ:boundedBp} there is an $M>0$ such that $|\Hx(\xi)| \leq M$ for all $\xi\in\RN$.
Since $\ourB^p \subset \ourB^{\infty}$ for all $p\in [1,\infty)$, it is sufficient to consider the case $p=\infty$.
Fix $T'>T$, then Lemma~\ref{lem:DufSchaef_2} shows that there exists a sine-type function $\Hu$ of type $T'/2$
and a constant $H$ such that the function
\begin{equation}
\label{equ:PlusU}
	\Hv(z) := \Hx(z) + \Hu(z)\;,
	\quad z\in\CN
\end{equation}
has no zeros for all $z = \xi + \I\eta\in\CN$ with $|\eta| > H$.
Choose $\Lambda^a = \{\lambda_{n} = \xi_{n} + \I\eta_{n}\}_{n\in \ZN}$ as the zero set of a sine-type function of type $\widetilde{T}/2 > T'/2$. 
By \cite{Levin_Perturbations} we can shift the imaginary parts of the points $\lambda_{n}$ such that $|\eta_n|>H$ for all $n$ while $\Lambda^a$ remains the zero set of a sine-type function.
Denote the corresponding sine-type function by $S$.

Now the signal \eqref{equ:PlusU} is modulated and sampled exactly as described in Sec.~\ref{sec:Measurement}.
Then the intensity measurements are given, similar as in \eqref{equ:IntensMeasure}, by $| \langle \Hbv_{n},\balpha_m \rangle|^{2}$.
Following the same steps as in the proof of Theorem~\ref{thm:MainThm}, one obtains the values of $\Hv$ at the sampling set $\Lambda^{a}$ up to a constant phase $\theta_{0}$.
Since by our construction overlapping interpolation points do not coincide with zeros of $\Hv$ the phase information can be propagated and we are able to recover $\Hv(\Lambda^a)\, \E^{\I \theta_0}$ from the intensity measurements for every signal $\Hv$ of the form \eqref{equ:PlusU}.

Since $\Hv \in \mathcal{B}^{\infty}_{T'/2}$ and $\sup_{n\in\ZN} |\eta_n|<H$, the sequence $\{ d_{n} := \Hv(\lambda_{n})\, \E^{\I\theta_0} \}_{n\in\ZN}$
is in $\ell^{\infty}$ such that Lemma~\ref{lem:InterpolSeriesInfty} can be applied to interpolate $\Hv(z)\, \E^{\I\theta_{0}}$ from these samples
\begin{equation*}
	\Hv(z)\,\E^{\I\theta_0} = \sum_{n\in\ZN}  d_{n}\, \frac{S(z)}{S'(\lambda_{n})} \left[\frac{1}{z-\lambda_{n}} + \frac{1}{\lambda_{n}}\right]\;,
	\quad z\in\CN\;.
\end{equation*}
However because $\theta_0$ is unknown, it is not possible to obtain $\Hx(z)$ directly from $\Hv(z)\,\E^{\I\theta_0}$ using \eqref{equ:PlusU}.
Instead, one can determine
\begin{align*}
  \widetilde{x}(z) &:= \Hv(z)\,\E^{\I\theta_0} - \Hu(z)
  = \Hx(z)\,\E^{\I\theta_0} - \Hu(z)\left( 1-\E^{\I\theta_0}\right)\;.
\end{align*}
If we choose $\Hu(z) = \cos(\frac{T'}{2}z)$ and apply the inverse Fourier-Laplace transform to $\widetilde{x}$, one obtains $x(t)\, \E^{\I\theta_0}$ for $t \in \TT$ since the inverse Fourier transform of the cosine function vanishes on $\TT$ (cf. Example~\ref{exa:Cosine}). 
\end{proof}

\section{Discussion and Outlook}
\label{sec:Summary}

To specify the sampling system in Fig.~\ref{fig:MeasureSetup}, one has to fix $K$, $M$, $a$ and $\beta$.
The number $K \geq 2$ can be chosen arbitrarily. Then $M = K^{2}$ is fixed, and $1\leq a\leq K-1$. 
The sampling period $\beta$ has to be chosen such that the sampling system satisfies Condition~B.
As discussed before, one possible choice may start with the zeros of the function $\sin(\frac{\widetilde{T}}{2} z)$ with $\widetilde{T}\geq T$.
Then $\delta := \lambda_{k} - \lambda_{k-1} = 2\pi/\widetilde{T}$ such that $\beta = (K-a)\,\delta$.
Therewith, the total sampling rate becomes
\begin{equation*}
R(a,K,\widetilde{T})
= \frac{M}{\beta} = \frac{K^2}{(K-a)\,\delta} = \frac{K^2}{K-a} \frac{\widetilde{T}}{2\pi} = \frac{K^2}{K-a}\frac{\widetilde{T}}{T} R_{\mathrm{Ny}}
\end{equation*} 
where $R_{\mathrm{Ny}} := T/(2\pi)$ is the Nyquist rate.
It is apparent that asymptotically, $R(a,K,\widetilde{T})$ grows proportionally to $K$, increases with the overlap $a$, and is bounded below by
\begin{equation*}
	\inf_{\substack{1\leq a<K, \\K\geq 1, \widetilde{T}>T}} R(a,K,\widetilde{T})
	= \inf_{\widetilde{T}>T}R(1,2,\widetilde{T})
	= 4 R_{\mathrm{Ny}}\,.
\end{equation*}
Since $\widetilde{T}/T$ can be made arbitrarily close to $1$ using Theorem~\ref{thm:MainThm} and Corollary~\ref{cor:Cor2}, we can sample at a rate which is almost as small as $4 R_{\mathrm{Ny}}$ while still ensuring perfect reconstruction. Asymptotically, this corresponds to the findings for finite dimensional spaces \cites{Balan1,Bodmann_StablePR2014,Fickus_VeryFewMeasurements}, where any $x \in \CN^{N}$ can be reconstructed from $M = O(4N)$  magnitude samples.

We note that the above framework can be applied exactly the same way for band-limited signals. To this end, one only has to exchange the time and frequency domain.
Then the modulators in Fig.~\ref{fig:MeasureSetup} have to be replaced by linear filters and the sampling of the magnitudes has to be done in the time domain.

Time-limited and band-limited signals are special cases of so-called modulation- and shift-invariant spaces, respectively.
Consequently, it is not surprising that, under some restriction on the generators of these space, it is fairly easy to extend the approach of the present paper to these signal spaces \cite{Pohl_ICASSP14}.

Moreover, using similar ideas as in this paper here, it is possible to construct deterministic measurement vectors for finite dimensional vector spaces $\CN^{N}$ which allow a very efficient and stable signal recovery from phaseless measurements \cite{PYB_STIP14}. More precisely, it is possible to construct a set $\{\psi_{n}\}^{4N-4}_{n=1}$ of measurement vectors in $\CN^{N}$ such that every $x = (x_{1},\dots,x_{N})^{\T} \in \CN^{N}$ with $x_{1} \neq 0$ can be reconstructed (up to a unitary factor) from the intensity measurements $\left|\left\langle x,\psi_{n}\right\rangle\right|^{2}$, $n=1,\dots,4N-4$.
Reconstruction is then based on a stable algebraic algorithm with a computational complexity which grows only linearly with the dimension~$N$.

\appendix
\section*{Appendix: An auxiliary lemma}
\renewcommand{\thesection}{A} 

\begin{lemma}
\label{lem:SumZeros}
Let $S$ be a sine-type function of type $\sigma$ and let $\Lambda = \{\lambda_n\}_{n\in\ZN}$ be its zero set.
If $\lambda_{n_{0}} \in \Lambda$ is an arbitrary zero of $S$ then
\begin{equation*}
	\sum_{n\neq n_{0}} \frac{1}{|\lambda_n - \lambda_{n_{0}}|^{2}} \leq C < \infty
\end{equation*}
with a constant $C$ which depends only on $S$ but not on $n_{0}$.
\end{lemma}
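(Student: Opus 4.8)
The plan is to use two facts that are built into the notion of a sine-type function: its zeros are \emph{separated}, and they all lie in a fixed horizontal strip.

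First I would record these two facts explicitly. By definition a sine-type function has separated zeros, so $\delta := \inf_{m\neq n}|\lambda_m - \lambda_n| > 0$. The lower bound in \eqref{equ:SineTypeH} gives $|S(\xi+\I\eta)| \geq A\,\E^{\sigma|\eta|} > 0$ for all $\xi\in\RN$ whenever $|\eta|\geq H$; hence no zero of $S$ can have $|\operatorname{Im}\lambda_n| \geq H$. Setting $w_n := \lambda_n - \lambda_{n_0}$, the points $\{w_n\}_{n\in\ZN}$ are pairwise at distance $\geq\delta$, satisfy $|\operatorname{Im} w_n| < 2H$, and obey $|w_n|\geq\delta$ for every $n\neq n_0$. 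All of the constants $\delta, H, A, \sigma$ depend only on $S$.

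Next comes the main step, a packing estimate that is uniform in $n_0$. For $j\in\ZN$ put $A_j := \{\,n\neq n_0 : j\leq \operatorname{Re} w_n < j+1\,\}$. Every $w_n$ with $n\in A_j$ lies in a rectangle of width $1$ and height $4H$; since the open disks of radius $\delta/2$ centred at these points are pairwise disjoint and contained in the concentric rectangle of width $1+\delta$ and height $4H+\delta$, comparing areas gives $\#A_j \leq N_0$ with $N_0 := \bigl\lceil (1+\delta)(4H+\delta)/(\pi(\delta/2)^2)\bigr\rceil$, a constant depending only on $S$ and, crucially, independent of $j$ and of $n_0$.

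Finally I would sum. For $|j|\geq 2$ and $n\in A_j$ one has $|w_n| \geq |\operatorname{Re} w_n| \geq |j|-1$, so $\sum_{|j|\geq 2}\sum_{n\in A_j}|w_n|^{-2} \leq N_0\sum_{|j|\geq 2}(|j|-1)^{-2} = 2N_0\sum_{k\geq 1}k^{-2} = N_0\pi^2/3$. The remaining indices lie in $A_{-1}\cup A_0\cup A_1$, hence number at most $3N_0$, and each contributes at most $\delta^{-2}$. Adding the two bounds yields $\sum_{n\neq n_0}|\lambda_n-\lambda_{n_0}|^{-2} \leq N_0\pi^2/3 + 3N_0\delta^{-2} =: C$, with $C$ depending only on $S$. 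The only point requiring care is that the rectangle dimensions and the separation constant $\delta$ do not depend on $n_0$, which is precisely what makes $C$ uniform; beyond that the argument is just a packing bound together with the convergence of $\sum k^{-2}$, so I do not anticipate a genuine obstacle.
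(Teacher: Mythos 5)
Your proof is correct, and it takes a genuinely different route from the paper's. The paper translates the origin to $\lambda_{n_0}$, forms $Q(z)=S(z+\lambda_{n_0})/z$, and applies Jensen's formula to the quotient: the uniform upper bound $|Q(z)|\leq P\,\E^{\sigma|z|}$ (via Phragm\'en--Lindel\"of and the strip condition) together with the uniform lower bound $|Q(0)|=|S'(\lambda_{n_0})|\geq\alpha$ yields a zero-counting estimate $n(r)\leq \ln(P/\alpha)+e\,\sigma\,r$, hence $|\widetilde{\lambda}_n|\gtrsim n$, and the sum is then compared with $\sum n^{-2}$. You instead use only two structural facts about the zero set --- separation by $\delta$ and confinement to a horizontal strip --- and run a disk-packing count in unit-width vertical slabs. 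Your argument is more elementary (no Jensen, no Phragm\'en--Lindel\"of, no lower bound on $|S'(\lambda_n)|$) and strictly more general: it proves the lemma for any $\delta$-separated sequence lying in a strip, whether or not it is the zero set of a sine-type function. What the paper's route buys in exchange is the explicit linear growth of the counting function $n(r)$, i.e.\ quantitative information on the density of the zeros tied to the exponential type $\sigma$, which is a stronger intermediate statement than the packing bound; for the purposes of this lemma, however, your bound $\#A_j\leq N_0$ uniformly in $j$ and $n_0$ does everything that is needed, and the uniformity in $n_0$ is transparent because the geometric configuration is translation-invariant. All steps check out: the zeros indeed satisfy $|\operatorname{Im}\lambda_n|<H$ by \eqref{equ:SineTypeH}, the differences $w_n$ lie in a strip of height $4H$, and the split into $|j|\geq 2$ (where $|w_n|\geq|j|-1$) and $|j|\leq 1$ (where $|w_n|\geq\delta$) closes the estimate.
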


\begin{proof}
Since $S$ is a sine-type function there are constants $A,B,H$ such that inequalities \eqref{equ:SineTypeH} hold. Furthermore there exists a constant $\delta > 0$ such that $|\lambda_{m} - \lambda_{n}| \geq \delta$ for all $m\neq n$ and a constant $\alpha>0$ such that $|S'(\lambda_{n})| \geq \alpha$ for all $\lambda_{n} \in \Lambda$ (see, e.g., \cites{Levin1997_Lectures,Young2001_Nonharmonic}).
In particular, $S$ is an entire function of exponential type $\sigma$.
Then the Phragm\'en-Lindel{\"o}f principle and \eqref{equ:SineTypeH} imply that $S$ is bounded on every line parallel to $\RN$.
Therefore there exists a constant $M$ such that 
\begin{equation}
\label{equ:AppProof_1}
	|S(\xi + \I \eta)| \leq M\, \E^{\sigma |\eta|}\,
	\quad\text{for all}\ \xi, \eta\in \RN\;.
\end{equation}
Set $\widetilde{S}(z) := S(z + \lambda_{n_{0}})$ and write $\lambda_{n_{0}} = \xi_{0} + \I \eta_{0}$.
It follows from \eqref{equ:SineTypeH} that $|\eta_{0}| \leq H$ and \eqref{equ:AppProof_1} gives 
\begin{equation*}
	\big|\widetilde{S}(\xi + \I \eta)\big|
	\leq M\, \E^{\sigma |\eta + \eta_{0}|}
	\leq M\, \E^{\sigma H}\, \E^{\sigma |\eta|}
	\quad\text{for all}\ \xi, \eta\in \RN\;.
\end{equation*}
Consequently $|\widetilde{S}(z)| \leq \widetilde{M}\, \E^{\sigma |z|}$ for all $z\in\CN$ and with the constant $\widetilde{M} = M \E^{\sigma H}$ which depends only on $S$ but not on $n_{0}$.
The zeros of $\widetilde{S}$ are $\widetilde{\lambda}_{n} = \lambda_{n} - \lambda_{n_0}$, and
we assume that they are ordered increasingly by their absolute values,
i.e. such that $0 = |\widetilde{\lambda}_{0}| < \delta \leq |\widetilde{\lambda}_1| \leq |\widetilde{\lambda}_{2}| \leq \dots$.
Then we define $Q(z) := \widetilde{S}(z)/z$.
This is again an entire function of exponential type $\sigma$ which satisfies
\begin{equation}
\label{equ:App:PropQ}
\begin{split}
	&|Q(z)| \leq P\, \E^{\sigma |z|}\quad\text{for all}\ z\in\CN
	 \quad\text{and with}\ P = \widetilde{M}\, \E^{\sigma} = M\, \E^{\sigma(H+1)}\;,\\
	&\big| Q(0) \big| = \big| \widetilde{S}'(0) \big| = \big| \widetilde{S}'(\widetilde{\lambda}_{0}) \big| \geq \alpha\;,
\end{split}	
\end{equation}
and the zero set of $Q$ is obviously $\{\widetilde{\lambda}_n\}^{\infty}_{n=1}$.
If $n(r)$ denotes the number of zeros of $Q$ for which $|\widetilde{\lambda}_n| \leq r$, then Jensen's formula \cite{Levin1997_Lectures} and \eqref{equ:App:PropQ} imply
\begin{equation*}
	N(r)
	:= \int^{r}_{0} \frac{n(\tau)}{\tau}\, \d\tau
	= \frac{1}{2\pi} \int^{\pi}_{-\pi} \ln|Q(r\E^{\I\theta})|\, \d\theta - \ln|Q(0)|
	\leq \ln(P/\alpha) + \sigma\, r\;.
\end{equation*}
Since $n(r)$ is non-decreasing, we have $N(e r) \geq \int^{e r}_{r} n(\tau)\, \tau^{-1}\, \d\tau \geq n(r)$, where $e = \E^{1}$.
This yields the upper bound $n(r) \leq \ln(P/\alpha) + e\, \sigma\, r$.
If we take $r = |\widetilde{\lambda}_{n}|$ then $n(r) = n$ and one gets $n = n(r) \leq \ln(P/\alpha) + e\, \sigma\, |\widetilde{\lambda}_n|$.
Now we have
\begin{equation*}
	\sum_{n\neq n_{0}} \big|\lambda_n - \lambda_{n_{0}}\big|^{-2}
	= \sum^{\infty}_{n=1} \big|\widetilde{\lambda}_{n}\big|^{-2}
	= \sum^{N-1}_{n=1} \big|\widetilde{\lambda}_{n}\big|^{-2} + \sum^{\infty}_{n=N} \big|\widetilde{\lambda}_{n}\big|^{-2}
\end{equation*}
where $N \in \NN$ was chosen as the smallest integer such that $N \geq \ln(P/\alpha) + 1$.
Next we use for the first sum on the right hand side that $|\widetilde{\lambda}_{n}| \geq \delta$ for all $n \geq 1$. 
In the second sum we apply the bound $|\widetilde{\lambda}_{n}| \geq [ n-\ln(P/\alpha) ]/[e\, \sigma]$ from above. This gives
\begin{multline*}
	\sum_{n\neq n_{0}} \frac{1}{\big|\lambda_n - \lambda_{n_{0}}\big|^{2}}
	\leq \frac{N-1}{\delta^{2}} + \sum^{\infty}_{n=N} \frac{\E^{2} \sigma^{2}}{[n - \ln(P/\alpha)]^{2}}\\
	\leq \frac{\ln(P/\alpha) + 1}{\delta^{2}} + \E^{2} \sigma^{2} \sum^{\infty}_{m=1} \frac{1}{m^2}
	= \frac{\ln(e P / \alpha)}{\delta^{2}} + \frac{\E^{2} \pi^{2}}{6}\, \sigma^{2}
	=: C
\end{multline*}
where the constant $C<\infty$ is independent of $n_0$ and depends only on $S$.
\end{proof}

\paragraph{Acknowledgments}
This work was partly supported by the German Research Foundation (DFG) under Grant PO~1347/2-1 and BO~1734/22-1.

\begin{bibdiv}
\begin{biblist}

\bib{Balan_Painless}{article}{
      author={Balan, Radun},
      author={Bodmann, Bernhard~G.},
      author={Casazza, Peter~G.},
      author={Edidin, Dan},
       title={{Painless reconstruction from magnitudes of frame coefficients}},
        date={2009-08},
     journal={{J.~Fourier Anal.~Appl.}},
      volume={15},
      number={4},
       pages={488\ndash 501},
}

\bib{Balan1}{article}{
      author={Balan, Radun},
      author={Casazza, Peter~G.},
      author={Edidin, Dan},
       title={{On signal reconstruction without phase}},
        date={2006-05},
     journal={{Appl. Comput. Harmon. Anal.}},
      volume={20},
      number={3},
       pages={345\ndash 356},
}

\bib{Bauschke02}{article}{
      author={Bauschke, Heinz~H.},
      author={Combettes, Patrick~L.},
      author={Luke, D.~Russell},
       title={{Phase retrieval, error reduction algorithm, and Fienup variants:
  a view from convex optimization.}},
        date={2002-07},
     journal={{J.~Opt. Soc. Am.~A}},
      volume={19},
      number={7},
       pages={1334\ndash 1345.},
}

\bib{Boche_Pohl_IEEE_IT_InterpolatedData}{article}{
      author={Boche, Holger},
      author={Pohl, Volker},
       title={{On the calculation of the Hilbert transform from interpolated
  data}},
        date={2008-05},
     journal={{IEEE} Trans. Inf. Theory},
      volume={54},
      number={5},
       pages={2358\ndash 2366},
}

\bib{Bodmann_StablePR2014}{article}{
      author={Bodmann, Bernhard~G.},
      author={Hammen, Nathaniel},
       title={{Stable phase retrieval with low-redundancy frames}},
        date={2014-07},
     journal={{Adv. Compt. Math.}},
      volume={41},
      number={1},
        note={to appear},
}

\bib{Burge_PhaseProblem76}{article}{
      author={Burge, R.~E.},
      author={Fiddy, M.~A.},
      author={Greenaway, A.~H.},
      author={Ross, G.},
       title={{The phase problem}},
        date={1976-08},
     journal={Proc. R.~Soc. Lond.~A},
      volume={350},
      number={1661},
       pages={192\ndash 212},
}

\bib{CandesEldar_PhaseRetrieval}{article}{
      author={Cand{\`e}s, Emmanuel~J.},
      author={Eldar, Yonina~C.},
      author={Strohmer, Thomas},
      author={Voroninski, Vladislav},
       title={{Phase retrieval via matrix completion}},
        date={2013},
     journal={{SIAM J.~Imaging Sci.}},
      volume={6},
      number={1},
       pages={199\ndash 225},
}

\bib{DuffinSchaeffer_38}{article}{
      author={Duffin, Richard},
      author={Schaeffer, A.~C.},
       title={{Some properties of functions of exponential type}},
        date={1938-04},
     journal={{Bull. Amer. Math. Soc.}},
      volume={44},
       pages={236\ndash 240},
}

\bib{Falldorf_SLM10}{article}{
      author={Falldorf, Claas},
      author={Agour, Mostafa},
      author={v.~Kopylow, Christoph},
      author={Bergmann, Ralf~B.},
       title={{Phase retrieval by means of spatial light modulator in the
  Fourier domain of an imaging system}},
        date={2010-04},
     journal={Applied Optics},
      volume={49},
      number={10},
       pages={1826\ndash 1830},
}

\bib{Fickus_VeryFewMeasurements}{article}{
      author={Fickus, Matthew},
      author={Mixon, Dustin~G.},
      author={Nelson, Aaron~A.},
      author={Wang, Yang},
       title={{Phase retrieval from very few measurements}},
        date={2014-05},
     journal={Linear Algebra Appl.},
      volume={449},
       pages={475\ndash 499},
}

\bib{Fienup1982_PhaseRetrieval}{article}{
      author={Fienup, J.~R.},
       title={{Phase retrieval algorithms: a comparison}},
        date={1982-08},
     journal={{Applied Optics}},
      volume={21},
      number={15},
       pages={2758\ndash 2769},
}

\bib{Fienup_93}{article}{
      author={Fienup, J.~R.},
      author={Marron, J.~C.},
      author={Schulz, T.~J.},
      author={Seldin, J.~H.},
       title={Hubble space telescope characterized by using phase-retrieval
  algorithms},
        date={1993-04},
     journal={Appl. Opt.},
      volume={32},
      number={10},
       pages={1747\ndash 1767},
}

\bib{Finkelstein_QuantumCom04}{article}{
      author={Finkelstein, J.},
       title={{Pure-state informationally complete and "really" complete
  measurements}},
        date={2004},
     journal={Phys. Rev.~A},
      volume={70},
       pages={052107},
}

\bib{Goodman_FourierOptik}{book}{
      author={Goodman, Joseph~W.},
       title={{Introduction to Fourier optics}},
   publisher={McGraw-Hill Comp.},
     address={New York},
        date={1996},
}

\bib{Oppenheim_Phase_80}{article}{
      author={Hayes, Monson~H.},
      author={Lim, Jae~S.},
      author={Oppenheim, Alan~V.},
       title={{Signal reconstruction from phase or magnitude}},
        date={1980-12},
     journal={{IEEE} Trans. Acoust., Speech, Signal Process.},
      volume={ASSP-28},
      number={6},
       pages={672\ndash 680},
}

\bib{Hoermander_LinDifOp}{book}{
      author={H{\"o}rmander, Lars},
       title={{Linear partial differential operators}},
   publisher={Springer-Verlag},
     address={Berlin},
        date={1976},
}

\bib{Jaming_Radar10}{inproceedings}{
      author={Jaming, Philippe},
       title={{The phase retrieval problem for the radar ambiguity function and
  vice versa}},
        date={2010-05},
   booktitle={{IEEE Intern. Radar Conf.}},
     address={Washington, DC, USA},
}

\bib{KatkovnikAstola_12}{article}{
      author={Katkovnik, Vladimir},
      author={Astola, Jaakkoo},
       title={{Phase retrieval via spatial light modulators phase modulation in
  4f optical setup: numerical inverse imaging with sparse regularization for
  phase and amplitude}},
        date={2012-01},
     journal={J.~Opt.~Soc.~Amer. A},
      volume={29},
      number={1},
       pages={105\ndash 116},
}

\bib{Levenshtein_98}{article}{
      author={Levenshtein, Vladimir},
       title={On designs in compact metric spaces and a universal bound on
  their size},
        date={1998},
     journal={Discrete Math.},
      volume={192},
       pages={251\ndash 271},
}

\bib{Levin1997_Lectures}{book}{
      author={Levin, B.~Y.},
       title={Lectures on entire functions},
   publisher={American Mathematical Society},
     address={Providence, RI},
        date={1997},
}

\bib{Levin_Perturbations}{article}{
      author={Levin, B.~Y.},
      author={Ostrovskii, I.~V.},
       title={Small perturbations of the set of roots of sine-type functions},
        date={1979},
     journal={Izv. Akad. Nauk SSSR Ser. Mat},
      volume={43},
      number={1},
       pages={87\ndash 110},
}

\bib{Vetterli2011_PRbeyondFienup}{inproceedings}{
      author={Lu, Yue~M.},
      author={Vetterli, Martin},
       title={{Sparse spectral factorization: unicity and reconstruction
  algorithms}},
        date={2011-05},
   booktitle={{Proc. 36th Intern. Conf. on Acoustics, Speech, and Signal
  Processing (ICASSP)}},
     address={Prague, Czech Republic},
       pages={5976\ndash 5979},
}

\bib{Marchesini_07}{article}{
      author={Marchesini, Stefano},
       title={{Phase retrieval and saddle-point optimization}},
        date={2007-10},
     journal={J.~Opt.~Soc.~Amer. A},
      volume={24},
      number={10},
       pages={3289\ndash 3296},
}

\bib{Miao_XRay_08}{article}{
      author={Miao, Jianwei},
      author={Ishikawa, Tetsuya},
      author={Shen, Qun},
      author={Earnest, Thomas},
       title={{Extending X-ray crystallography to allow the imaging of
  noncrystalline materials, cells, and single protein complexes}},
        date={2008-05},
     journal={Annu. Rev. Phys. Chem.},
      volume={59},
       pages={387\ndash 410},
}

\bib{Millane_90}{article}{
      author={Millane, R.~P.},
       title={{Phase retrieval in crystallography and optics}},
        date={1990-03},
     journal={J.~Opt.~Soc.~Amer. A},
      volume={7},
      number={3},
       pages={394\ndash 411},
}

\bib{MoenichBoche_SP2010}{article}{
      author={M{\"o}nich, Ullrich~J.},
      author={Boche, Holger},
       title={Non-equidistant sampling for bounded bandlimited signals},
        date={2010-07},
     journal={Signal Processing},
      volume={90},
      number={7},
       pages={2212\ndash 2218},
}

\bib{Ortega-Seip}{article}{
      author={Ortega-Cerd{\`a}, Joaquim},
      author={Seip, Kristian},
       title={{Multipliers for Entire Functions and an Interpolation Problem of
  Beurling}},
        date={1999},
     journal={J. Funct. Anal.},
      volume={162},
      number={2},
       pages={400\ndash 415},
}

\bib{PYB_STIP14}{article}{
      author={Pohl, Volker},
      author={Yang, Fanny},
      author={Boche, Holger},
       title={{Phase retrieval from low-rate samples}},
        date={2014-05},
     journal={{Sampl. Theory Signal Image Process.}},
      volume={13},
      eprint={arXiv:1311.7045},
        note={to appear},
}

\bib{Pohl_ICASSP14}{inproceedings}{
      author={Pohl, Volker},
      author={Yapar, C.},
      author={Boche, Holger},
      author={Yang, Fanny},
       title={{A phase retrieval method for signals in modulation-invariant
  spaces}},
        date={2014-05},
   booktitle={{Proc. 39th Intern. Conf. on Acoustics, Speech, and Signal
  Processing (ICASSP)}},
     address={Florence, Italy},
}

\bib{Rabiner_SpeechRec}{book}{
      author={Rabiner, Lawrence},
      author={Juang, Biing-Hwang},
       title={{Fundamentals of speech recognition}},
   publisher={Prentice Hall, Inc.},
     address={Englewood Cliffs},
        date={1993},
}

\bib{Ross_PhaseProblem78}{article}{
      author={Ross, G.},
      author={Fiddy, M.~A.},
      author={Nieto-Vesperinas, M.},
      author={Wheeler, M.~W.~L.},
       title={{The phase problem in scattering phenomena: The zeros of entire
  functions and their significance}},
        date={1978-03},
     journal={Proc. R.~Soc. Lond.~A},
      volume={360},
      number={1700},
       pages={25\ndash 45},
}

\bib{Rudin_FktAnalysis}{book}{
      author={Rudin, Walter},
       title={{Functional analysis}},
     edition={2},
   publisher={McGraw-Hill},
     address={Boston},
        date={1991},
}

\bib{Thakur2011}{article}{
      author={Thakur, Gaurav},
       title={Reconstruction of bandlimited functions from unsigned samples},
        date={2011-08},
     journal={J.~Fourier Anal.~Appl.},
      volume={17},
      number={4},
       pages={720\ndash 732},
}

\bib{Vladimirov}{book}{
      author={Vladimirov, V.~S.},
       title={{Equations of mathematical physics}},
   publisher={Marcel Dekker, Inc.},
     address={New York},
        date={1971},
}

\bib{Xiao_DistortedObject05}{article}{
      author={Xiao, Xianghui},
      author={Shen, Qun},
       title={{Wave propagation and phase retrieval in Fresnel diffraction by a
  distorted-object approach}},
        date={2005},
     journal={Phys. Rev.~B},
      volume={72},
       pages={033103},
}

\bib{Yang_SampTA13}{inproceedings}{
      author={Yang, Fanny},
      author={Pohl, Volker},
      author={Boche, Holger},
       title={{Phase retrieval via structured modulations in Paley-Wiener
  spaces}},
        date={2013-07},
   booktitle={{Proc. 10th Intern. Conf. on Sampling Theory and Applications
  (SampTA)}},
     address={Bremen, Germany},
}

\bib{Young2001_Nonharmonic}{book}{
      author={Young, Robert~M.},
       title={An introduction to nonharmonic fourier series},
   publisher={Academic Press},
     address={Cambridge},
        date={2001},
}

\bib{Zauner_Quantendesigns}{article}{
      author={Zauner, Gerhard},
       title={{Quantum designs: Foundations of a noncommutative design
  theory}},
        date={2011-02},
     journal={{Int. J. Quantum Inform.}},
      volume={9},
      number={1},
       pages={445\ndash 507},
}

\bib{Zhang_ApatureMod07}{article}{
      author={Zhang, Fucai},
      author={Pedrini, Giancarlo},
      author={Osten, Wolfgang},
       title={{Phase retrieval of arbitrary complex-valued fields through
  aperture-plane modulation}},
        date={2007},
     journal={Phys. Rev.~A},
      volume={75},
       pages={043805},
}

\end{biblist}
\end{bibdiv}

\end{document}